\theoremstyle{plain}
\DeclareMathOperator{\sign}{sign}
\newtheorem{theorem}{Theorem}
\newtheorem{lemma}{Lemma}
\theoremstyle{definition}
\newtheorem{remark}{Remark}
\begin{document}


\title{Spatial point processes intensity estimation \\ with a diverging number of covariates}
\author[1]{Achmad Choiruddin}
\author[2, 3]{Jean-Fran\c cois Coeurjolly}
\author[3]{Fr\'ed\'erique Letu\'e}
\affil[1]{Department of Mathematical Sciences, Aalborg University, Denmark}
\affil[2]{Department of Mathematics, Universit\'e du Qu\'ebec \`a Montr\'eal (UQAM), Canada}
\affil[3]{Laboratory Jean Kuntzmann, Department of Probability and Statistics, Universit\'e Grenoble Alpes, France}
\maketitle


\begin{quotation}
\noindent {\it Abstract:} Feature selection procedures for spatial point processes parametric intensity estimation have been recently developed since more and more applications involve a large number of covariates. In this paper, we investigate the setting where the number of covariates diverges as the domain of observation increases. In particular, we consider estimating equations based on Campbell theorems derived from Poisson and logistic regression likelihoods regularized by a general penalty function. We prove that, under some conditions, the consistency, the sparsity, and the asymptotic normality are valid for such a setting. We support the theoretical results by numerical ones obtained from simulation experiments and an application to forestry datasets.

\vspace{9pt}
\noindent {\it Key words and phrases:}
Campbell formula, estimating equation, high dimensional regression, regularization method, variable selection.
\par
\end{quotation}\par

\def\thefigure{\arabic{figure}}
\def\thetable{\arabic{table}}

\renewcommand{\theequation}{\thesection.\arabic{equation}}
\fontsize{12}{14pt plus.8pt minus .6pt}\selectfont

\section{Introduction} \label{ch2:intro}
\subsection{Background}
Spatial point pattern data arise in many contexts, e.g. ecology \citep[see e.g.][]{moller2003statistical, renner2015point}, epidemiology \citep[e.g.][]{diggle1990point, diggle2013statistical}, criminology \citep[e.g.][]{baddeley2015spatial, shirota2017statistical}, biology \citep[e.g.][]{illian2008statistical} and astronomy \citep[e.g.][]{baddeley2015spatial}, where interest lies in describing the distribution of an event in space.  Stochastic models generating spatial point patterns are called spatial point processes \citep[see e.g.][]{moller2003statistical, illian2008statistical, diggle2013statistical, baddeley2015spatial}.
	
Usually, the first step to analyze spatial point pattern data is to investigate the intensity. The intensity serves as the first-order characteristics of a spatial point process and often becomes the main interest in many studies, especially when the intensity is suspected to depend on spatial covariates. Examples include the study of spatial variation of specific disease risk related to pollution sources \citep[e.g.][]{diggle1990point, diggle2013statistical}, crime rate analysis in a city related to some demographical information \citep[e.g.][]{shirota2017statistical}, and modeling of the spatial distribution of trees species in a forest related to some environmental factors \citep[e.g.][]{waagepetersen2007estimating, thurman2015regularized, renner2015point}.

We focus in this study on the log-linear model for the intensity function of an inhomogeneous spatial point process defined by
\begin{align}
\label{ch2:intensity function}
\rho (u;\boldsymbol \beta)=\exp(\mathbf{z}(u)^\top \boldsymbol \beta), u \in D \subset \mathbb{R}^d,
\end{align}
where $\mathbf{z}(u)=\{ z_1(u),\ldots,z_p(u)\}^\top$ are the $p$ spatial covariates measured at location $u$, $d$ represents the state space of the spatial point processes (usually $d=2,3$) and $\boldsymbol \beta=\{\beta_1,\ldots,\beta_p\}^\top$ is a real $p$-dimensional parameter. Hence, our main concern is to assess the magnitudes of the vector $\boldsymbol \beta$. For parametric estimation, while maximum likelihood estimation \citep[e.g.][]{berman1992approximating, rathbun1994asymptotic} has been widely implemented for Poisson point processes models, estimating equation-based methods \citep[e.g.][]{waagepetersen2007estimating,  waagepetersen2008estimating, guan2010weighted, baddeley2014logistic} are simpler to implement for more general spatial point processes models, overcoming the possible drawback of MCMC methods which are usually computational expensive \citep{moller2003statistical}. However, when the number of covariates is relatively large, maximum likelihood estimation  and estimating equation-based methods become undesirable: all covariates are selected yielding an increasing standard error for parameter estimates.

\subsection{Feature selection techniques}

To select significant covariates, one may consider a traditional procedure such as a stepwise method. This technique starts with an initial set of covariates, then considers adding or deleting a covariate from the current set at each iteration using a criterion such as an F-statistic or AIC. However, such procedure has a number of limitations: it can be numerically unstable and exhibits high variance due to its discrete procedure \citep[e.g.][]{breiman1996heuristics,fan2001variable,friedman2008elements}. It is even computationally unfeasible  especially when the number of covariates is too large \citep[e.g.][]{breiman1996heuristics,zou2006adaptive}. 
	
To overcome this drawback, regularization techniques have recently been developed for spatial point processes intensity estimation. Such methods are able to perform variable selection while keeping interesting properties in terms of prediction. For Poisson point process models, the idea is to penalize the Poisson likelihood by a penalty function such as $l_1$ penalty \citep[see][]{renner2013equivalence, thurman2014variable}. For more general point process models, instead of employing the likelihood of the processes which often requires computational intensive MCMC methods \citep{moller2003statistical}, penalized versions of estimating equations based on Campbell theorem derived both from Poisson and logistic regression likelihoods have been developed \citep[see][]{thurman2015regularized, choiruddin2017convex}. Furthermore, \cite{thurman2015regularized} and \cite{choiruddin2017convex} show that, under some conditions, the estimates obtained from such procedures are consistent, sparse, and asymptotically normal.

\subsection{Issues in high dimensional data}

The motivation of our paper comes from a study of biodiversity in a 50-hectare region ($D=1,000\mathrm{m} \times  500\mathrm{m}$) of the tropical moist forest of Barro Colorado Island (BCI) in central Panama, where censuses have been carried out such that all free-standing woody stems at least 10 mm diameter at breast height were identified, tagged, and mapped, resulting in maps of over 350,000 individual trees with around 300 species \citep[see][]{condit1998tropical, hubbell1999light, hubbell2005barro}. In the same region, many environmental covariates such as topographical attributes and soil properties have been also collected. In particular, we are interested to study the spatial distribution of 3,604 locations of {\it Beilschmiedia pendula Lauraceae} (BPL) trees and to model its intensity as a parametric function of 93 covariates consisting of 2 topological attributes, 13 soil properties and 78 interactions between two soil nutrients.
	
Although it seems that the number of covariates is not very large with respect to the number of data points, two hours are required to estimate the parameters and select covariates using a standard stepwise procedure. To do this, we use the \texttt{step} function in \texttt{R} which intrinsically assumes that $\mathbf X$ is a Poisson point process since we use the AIC in the stepwise procedure. For a general point process, other criteria could be investigated such as the one based on the $F$ statistic, but they require to estimate the asymptotic covariance matrix of the estimates at each step: even if we know the right covariates, it is known as a difficult task (see e.g. 
\citet{coeurjolly14}) especially when the number of parameters is large. That would easily triple the time of this estimation/selection procedure. To evaluate the performance of such a selection/estimation procedure, a simulation would be required which is unrealistic (1000 replications of a single model would take 250 days). This motivates us to consider regularization methods.
	
\cite{thurman2015regularized} and \cite{choiruddin2017convex} are the first two theoretical works. Both these works have the important limitation that the number of covariates $p$ is finite. We extend this in the present paper. Asymptotic properties which consider a diverging number of parameters for $M$-estimators have a long story \citep[e.g][]{huber1973robust, portnoy1984asymptotic} but have recently been investigated for penalized regression estimators by \citet{fan2004nonconcave,zou2009adaptive}. In particular, as argued by \citet{fan2004nonconcave}, even though the asymptotic properties (i.e., consistency, sparsity, and asymptotic normality) proposed by \cite{fan2001variable} for penalized generalized linear models under the assumption that $p$ is finite, are encouraging, there are many naive and simple model selection procedures which possess those properties. Establishing the validity of these asymptotic properties in a diverging number of parameters setting is, therefore, a major importance. We study this type of asymptotic properties in the spatial point processes framework. Hence, our work can be regarded as an extension of the study conducted by \citet{choiruddin2017convex}.
	
A standard way of measuring asymptotic for spatial point process is the increasing domain asymptotic. Therefore, we investigate the problem where $p=p_n$ grows with $|D_n|$ the volume of the observation domain. In our setting, $|D_n|$ plays the same role as $n$, the number of observations, in standard problems such as in linear models or generalized linear models. We obtain  consistency, sparsity, and asymptotic normality for our estimator. One of our main assumptions is that $p_n^3/|D_n| \to 0$ as $n \to \infty$, which is similar to the one required by \cite{fan2004nonconcave} when $|D_n|$ is simply replaced by $n$ (the sample size in their context).
	
Our results are general: (1) a large choice of penalty functions (either convex or non-convex function) and methods (e.g. ridge, lasso, elastic net, SCAD, and MC+) are available; (2) we include a large class of mixing spatial point processes. The implementation is done by combining the $\texttt{spatstat}$  \citep{baddeley2015spatial} $\texttt{R}$ package with the two $\texttt{R}$ packages implementing penalized methods for generalized linear models: $\texttt{glmnet}$ \citep{friedman2010regularization} and $\texttt{ncvreg}$ \citep{breheny2011coordinate}.

\subsection{Outline of the paper}
In Section~\ref{sec:reg}, we introduce brief background on spatial point processes as well as regularization methods for spatial point processes intensity estimation. Section~\ref{ch2:sec:asymp} presents our asymptotic results. We investigate in Section~\ref{ch2:sec:num} the finite sample performance of our estimates in a simulation study and in an application to tropical forestry datasets. Conclusion and discussion are presented in Section~\ref{conclu}. Proofs of the main results are postponed to Appendices~\ref{ch2:auxLemma}-\ref{ch2:proof2}.

\section {Regularization methods for spatial point processes}
\label{sec:reg}
This section gives brief introduction on spatial point processes and reviews regularization methods for spatial point processes intensity estimation previously studied by \citet{choiruddin2017convex}  when the number of parameters is finite.

Let $\mathbf{X}$ be a spatial point process on $\mathbb{R}^d$. We view $\mathbf{X}$ as a locally finite random subset of $\mathbb{R}^d$. Let $D \subset \mathbb{R}^d$ be a compact set of Lebesgue measure $|D|$ which will play the role of the observation domain. A realization of $\mathbf{X}$ in $D$ is thus a set $\mathbf{x}=\{x_1, x_2, \ldots, x_m\}$, where $x \in D$ and $m$ is the observed number of points in $D$.  Suppose $\mathbf{X}$ has intensity function $\rho$ and second-order product density $\rho^{(2)}$. Campbell theorem \citep[see e.g.][]{moller2003statistical} states that, for any function $k: \mathbb{R}^d \to [0,\infty)$ or $k: \mathbb{R}^d \times \mathbb{R}^d \to [0,\infty)$
\begin{align}
&\mathbb{E} \Big( \sum_{u \in \mathbf{X}} k(u) \Big) ={\int_{\mathbb{R}^d} k(u) \rho (u)\mathrm{d}u} \label{ch2: eq:campbell} \\
&\mathbb{E} \Big(\sum_{u,v \in \mathbf{X}}^{\neq} k(u,v)\Big)=\int_{\mathbb{R}^d}{\int_{\mathbb{R}^d} k(u,v) \rho^{(2)} (u,v)\mathrm{d}u \mathrm{d}v} \label{ch2:eq:campbell2}.
\end{align}
We may interpret $\rho(u) \mathrm{d}u$ as the probability of occurence of a point in an infinitesimally small ball with centre $u$ and volume $\mathrm{d}u$. Intuitively, $\rho^{(2)}(u,v)\mathrm{d}u\mathrm{d}v$ is the probability for observing a pair of distinct points from $\mathbf{X}$ occuring jointly in each of two infinitesimally small balls with centres $u,v$ and volume $\mathrm{d}u, \mathrm{d}v$. For further background materials on spatial point processes, see for example \citet{moller2003statistical, illian2008statistical}.

In our study, we assume that the intensity function depends on parameter $\boldsymbol \beta$, $\rho= \rho(\cdot;\boldsymbol \beta)$. The standard parametric methods for estimating $\boldsymbol \beta$ are by maximizing the weighted Poisson likelihood \citep[e.g.][]{guan2010weighted} or the weighted logistic regression likelihood \citep[e.g.][]{baddeley2014logistic, choiruddin2017convex} given respectively by
\begin{align}
\ell_\mathrm{PL}(w;\boldsymbol \beta)  = & {\sum_{u \in \mathbf{X} \cap D} w(u)\log\rho(u;\boldsymbol \beta)} - {\int_D w(u)\rho(u; \boldsymbol\beta)\mathrm{d}u}, \label{ch2:Pois}    \\
\ell_\mathrm{LRL}(w; \boldsymbol \beta) = & {\sum_{u \in \mathbf{X}  \cap D} w(u) \log \left ( \frac {\rho(u;\boldsymbol \beta)} {\delta(u)+\rho(u;\boldsymbol \beta)} \right )} \nonumber \\
&- {\int_D w(u) \delta(u) \log \left (\frac {\rho(u;\boldsymbol \beta) + \delta(u)} {\delta(u)} \right ) \mathrm{d}u}, \label{ch2:logistic}
\end{align}
where $w(\cdot)$ is a weight non-negative function depending on the first and the second-order characterictics of $\mathbf{X}$ and $\delta(\cdot)$ is a non-negative real-valued function. The solution of maximizing (\ref{ch2:Pois}) (resp. (\ref{ch2:logistic})) is called Poisson estimator (resp. the logistic regression estimator). We refer readers to \cite{guan2010weighted} for further details on the weight function $w(\cdot)$ and to \cite{baddeley2014logistic} for the role of function $\delta(\cdot)$.

These standard methods cannot perform variable selection. To do so, \cite{thurman2015regularized} and \cite{choiruddin2017convex} suggest to maximize a penalized version of (\ref{ch2:Pois})-(\ref{ch2:logistic})
\begin{align}
\label{ch2:regmed}
	Q(w; \boldsymbol \beta)=\ell(w; \boldsymbol \beta) - |D|{\sum_{j=1}^p p_{\lambda_j}(|\beta_{j}|)},
\end{align}
where $\ell(w; \boldsymbol \beta)$ is either the Poisson likelihood (\ref{ch2:Pois}) or the logistic regression likelihood (\ref{ch2:logistic}). We refer the second term of (\ref{ch2:regmed}) to a penalization term. In this term, we have mainly two parts: (1) a penalty function $p_\lambda$ parameterized by $\lambda \geq 0$ and (2) the volume of the observation domain $|D|$ which plays the same role as the sample size in the spatial point process framework.

For any non-negative $\lambda$, we say that $p_\lambda(\cdot)$ is a penalty function if $p_\lambda$ is a non-negative function with $p_\lambda(0)=0$. Some examples, described in Table~\ref{ch2:table:rm}, include  $l_2$ penalty \citep{hoerl1988ridge}, $l_1$ penalty \citep{tibshirani1996regression}, elastic net \citep{zou2005regularization}, SCAD \citep{fan2001variable}, and MC+ \citep{zhang2010nearly}. Note that, as indicated by \eqref{ch2:regmed}, we allow each direction to have different tuning parameters $\lambda_j, j=1,\ldots,p$. Such a method is called an adaptive method (e.g. adaptive lasso \citep{zou2006adaptive} and adaptive elastic net \citep{zou2009adaptive}). For further backgrounds about penalty function and regularization methods, see, for example, \cite{friedman2008elements}.

\setlength{\tabcolsep}{3pt}
\renewcommand{\arraystretch}{1.5}
\begin{table}[htb]
\caption{Examples of penalty function.}
\label{ch2:table:rm}
\centering
\begin{threeparttable}
\begin{tabular}{ll}
\hline
\hline
Penalty & $ p_{\lambda}(\theta)$ \\
\hline
\hline
$l_2$ penalty &  $ \frac{1}{2} \lambda \theta^2$ \\
$l_1$ penalty & $\lambda |\theta| $ \\
Enet & $ \lambda \{\gamma |\theta|+\frac{1}{2} (1-\gamma)\theta^2\}$, for any $0 < \gamma<1 $ \\ 
\multirow{2}{*}{SCAD} & $ \lambda \theta \mathbb{I}(\theta \leq \lambda)+ \frac{\gamma \lambda \theta - \frac {1}{2} (\theta^2 + \lambda^2)}{\gamma-1} \mathbb{I}(\lambda \leq \theta \leq \gamma \lambda) + \frac{\lambda^2(\gamma^2-1)}{2(\gamma-1)}\mathbb{I}(\theta \geq \gamma \lambda)$, \\
& for any $\gamma>2 $\\
MC+ &  $ \Big(\lambda \theta - \frac{\theta^2}{2 \gamma}\Big) \mathbb{I}(\theta \leq  \gamma \lambda)+\frac{1}{2}\gamma \lambda^2 \mathbb{I}(\theta \geq \gamma \lambda)$, for any $\gamma>1 $\\
\hline
\end{tabular}
\end{threeparttable}
\end{table}

\section{Asymptotic properties}
\label{ch2:sec:asymp}
In this section, we present asymptotic properties of the regularized Poisson estimator when both $|D_n| \to \infty$ and $p_n \to \infty$ as $n \to \infty$. In particular, we consider $\mathbf{X}$ as a $d$-dimensional point process observed over a sequence of observation domain $D=D_n, n=1,2, \ldots$ which expands to $\mathbb{R}^d$ as $n \to \infty$. We assume that $\mathbf{X}$ has a log-linear form  given by (\ref{ch2:intensity function}) for which the dimension of parameter $\boldsymbol \beta$, denoted now by $p_n$, diverges to $\infty$ as $n \to \infty$.  In Section \ref{ch2:sec:not}, we provide notation and conditions, and discuss the differences   from the setting where $p$ is fixed. Our main results are presented in Section \ref{sec:result}.  For sake of conciseness, we do not present the asymptotic results for the regularized logistic regression estimator. The results are very similar. The main difference is lying in the conditions ($\mathcal C$.\ref{ch2:C:BnCn}) and ($\mathcal C$.\ref{ch2:C:An}) for which the matrices  $\mathbf{A}_n, \mathbf{B}_n,$ and $\mathbf{C}_n$ have a different expression (see Remark \ref{ch2:Rlogistic}).

\subsection{Notation and conditions} \label{ch2:sec:not}
Throughout this section and Appendices~\ref{ch2:auxLemma}-\ref{ch2:proof2}, let 
\begin{align}
\ell_n (w;\boldsymbol \beta)=& \ell_{n,\mathrm{PL}}(w;\boldsymbol \beta) \nonumber \\
 =& {\sum_{u \in \mathbf{X} \cap D_n} w(u)\log\rho(u;\boldsymbol \beta)} - {\int_{D_n} w(u)\rho(u; \boldsymbol\beta)\mathrm{d}u},  \label{ln}\\
Q_n(w;\boldsymbol \beta)= & \ell_n (w;\boldsymbol \beta) -  |D_n| {\sum_{j=1}^{p_n} p_{\lambda_{n,j}}(|\beta_j|)} \label{Qn},
\end{align}
be respectively the weighted Poisson likelihood and its penalized version.

Let $\boldsymbol \beta_0 = \{\beta_{01},\ldots,\beta_{0s},\beta_{0(s+1)},\ldots,\beta_{0p_n}\}^\top=\{\boldsymbol \beta^{\top}_{01},\boldsymbol \beta^{\top}_{02}\}^\top = (\boldsymbol \beta_{01}^\top, \mathbf 0^\top)^\top$ denote the $p_n$-dimensional vector to estimate, where ${\boldsymbol \beta_{01}}$ is the $s$-dimensional vector of non-zero coefficients and $\boldsymbol \beta_{02}$ is the ($p_n-s$)-dimensional vector of zero coefficients. We assume that the number of non-zero coefficients, $s$, does not depend on $n$.  Let $\mathbf{z}_{01}$ and $\mathbf{z}_{02}$ denote the corresponding $s$-dimensional and ($p_n - s$)-dimensional vectors of spatial covariates. We denote the regularized Poisson estimator by ${\boldsymbol {\hat { \beta}}}=(\boldsymbol{\hat \beta}_1^\top, \boldsymbol{\hat \beta}_2^\top)^\top $. 


We recall the classical definition of strong mixing coefficients adapted to spatial point processes \citep[e.g.][]{politis1998large}: for $k,l \in \mathbb{N} \cup \{ \infty \}$ and $q \geq 1$, define
\begin{align}
\alpha_{k,l}(q)=\sup\{
&|\mathrm{P}(A \cap B)-\mathrm{P}(A)\mathrm{P}(B)|: A\in \mathscr{F} (\Lambda_1), B \in \mathscr{F} (\Lambda_2), \nonumber\\
&\Lambda_1 \in \mathscr{B} (\mathbb{R}^d), \Lambda_2 \in \mathscr{B} (\mathbb{R}^d), |\Lambda_1| \leq k, |\Lambda_2| \leq l, d(\Lambda_1, \Lambda_2) \geq q \}, \label{ch2:eq:5}
\end{align}
where $\mathscr{F}$ is the $\sigma$-algebra generated by $\mathbf{X} \cap \Lambda_i, i=1,2, d(\Lambda_1, \Lambda_2)$ is the minimal distance between sets $\Lambda_1$ and $\Lambda_2$, and $\mathscr{B}(\mathbb{R}^d)$ denotes the class of Borel sets in $\mathbb{R}^d$.

We  define the $p_n \times p_n$ matrices $\mathbf{A}_n(w;\boldsymbol{\beta}_{0}), \mathbf{B}_n(w;\boldsymbol{\beta}_{0})$ and $\mathbf{C}_n(w;\boldsymbol{\beta}_{0})$ by
\begin{align*}
\mathbf{A}_n(w;\boldsymbol{\beta}_{0})&={\int_{D_n} w(u)\mathbf{z}(u)\mathbf{z}(u)^\top \rho(u;\boldsymbol{\beta}_{0})\mathrm{d}u}, \\
\mathbf{B}_n(w;\boldsymbol{\beta}_{0})&={\int_{D_n} w(u)^2 \mathbf{z}(u)\mathbf{z}(u)^\top \rho(u;\boldsymbol{\beta}_{0})\mathrm{d}u}, \\
\mathbf{C}_n(w;\boldsymbol{\beta}_{0})&={\int_{D_n} \int_{D_n} w(u) w(v)\mathbf{z}(u)\mathbf{z}(v)^\top \{g(u,v)-1\} \rho(u;\boldsymbol{\beta}_{0}) \rho(v;\boldsymbol{\beta}_{0}) \mathrm{d}u \mathrm{d}v},
\end{align*}
where $g(u,v)$ is the classical pair correlation function \cite[][]{moller2003statistical} given by
\begin{align*}
g(u,v)=\frac{\rho^{(2)}(u,v)}{\rho(u)\rho(v)},
\end{align*}
when both $\rho$ and $\rho^{(2)}$ exist with the convention $0/0=0$. For a Poisson point process, we have $g(u,v)=1$ since $\rho^{(2)}(u,v)=\rho(u)\rho(v)$. If, for example, $g(u,v)>1$ (resp. $g(u,v)<1$), this indicates that pair of points are more likely (resp. less likely) to occur at locations $u,v$ than for a Poisson point process. 

We denote the $s \times s$ top-left corner of $\mathbf{A}_{n}(w;\boldsymbol{\beta}_{0})$  $(\mbox{resp. } \mathbf{B}_{n}(w;\boldsymbol{\beta}_{0})$, $ \mathbf{C}_{n}(w;\boldsymbol{\beta}_{0}))$ by $\mathbf{A}_{n,11}(w;\boldsymbol{\beta}_{0})$ $(\mbox{resp. } \mathbf{B}_{n,11} (w;\boldsymbol{\beta}_{0}), \mathbf{C}_{n,11}(w;\boldsymbol{\beta}_{0}))$. It is worth noticing that $\mathbf{A}_{n,11}(w;\boldsymbol{\beta}_{0})$, $\mathbf{B}_{n,11}(w;\boldsymbol{\beta}_{0})$ and $\mathbf{C}_{n,11}(w;\boldsymbol{\beta}_{0})$ depend on $n$ only through $D_n$ and not through $p_n$. In what follows, for a squared symmetric matrix $\mathbf{M}_n$, $\nu_{\min}(\mathbf M_n)$ and $\nu_{\max}(\mathbf M_n)$ denote respectively the smallest and largest eigenvalue of $\mathbf M_n$.

Under the conditions ($\mathcal C$.\ref{ch2:C:plambda})-($\mathcal C$.\ref{ch2:C9}), we define the sequences $a_n$, $b_n$ and $c_n$ by
\begin{align}	
a_n &=\max_{j=1,\ldots,s} | p'_{\lambda_{n,j}}(|\beta_{0j}|)| , \label{ch2:eq:an} \\
b_n &=\inf_{j=s+1,\ldots,p_n} \inf_{\substack{|\theta| \leq \epsilon_n \\ \theta \neq 0}} p'_{\lambda_{n,j}}(\theta) \label{ch2:eq:bn}, \mbox{ for } \epsilon_n=K_1 \sqrt{\frac{p_n}{|D_n|}},	\\
c_n &=  \max_{j=1,\ldots,s} |p^{\prime\prime}_{\lambda_{n,j}}(|\beta_{0j}|) |,  \label{ch2:eq:cn}
\end{align}
where $K_1$ is any positive constant. 

Consider the following conditions ($\mathcal C$.\ref{ch2:C:Dn})-($\mathcal C$.\ref{ch2:C9}) which are required to derive our asymptotic results:

\begin{enumerate}[($\mathcal C$.1)]
\item  For every $n \geq 1, D_n=nE=\{ne: e \in E\}$, where $E \subset \mathbb{R}^d$ is convex, compact, and contains the origin of $\mathbb{R}^d$ in its interior. \label{ch2:C:Dn}
\item The intensity function has the log-linear specification given by~\eqref{ch2:intensity function} where $\boldsymbol \beta \in \Theta$ and $\Theta$ is an open convex bounded set of $\mathbb R^{p_n}$. Furthermore, we assume that there exists a neighborhood $\Xi(\boldsymbol \beta_0)$ of $\boldsymbol \beta_0$ such that 
\[
\sup_{n\geq 1} \; \sup_{\boldsymbol \beta \in \Xi(\boldsymbol \beta_0)} \; \sup_{u \in \mathbb{R}^d} \rho(u; \boldsymbol \beta) < \infty.
\]  \label{ch2:C:Theta}
\item  The covariates $\mathbf z$ and the weight function $w$ satisfy
\[
 	\sup_{n\geq 1} \; \sup_{i=1,\dots,p_n} \; \sup_{u \in \mathbb{R}^d} |z_i(u)| < \infty, \; 
 	\quad \mbox{ and } \quad 
 	\sup_{u \in \mathbb{R}^d} w(u)<\infty.
 \] \label{ch2:C:cov}
\item  There exists an integer $t \geq 1$ such that for $k=2, \ldots, 2+t$, the product density $\rho^{(k)}$ exists and satisfies $\rho^{(k)}< \infty$. \label{ch2:C:rhok}
\item  For the strong mixing coefficients (\ref{ch2:eq:5}), we assume that there exists some $\tilde t > d(2+t)/ t$ such that $\alpha_{2,\infty}(q)=O(q^{-\tilde t})$. \label{ch2:C:mixing}
\item $\liminf_{n} \nu_{\min}\big(|D_n|^{-1}\{\mathbf{B}_{n,11}(w;\boldsymbol \beta_0)+\mathbf{C}_{n,11}(w;\boldsymbol \beta_0)\} \big)>0$. \label{ch2:C:BnCn}
\item $\liminf_{n}\nu_{\min}\big(|D_n|^{-1}\mathbf{A}_{n}(w;\boldsymbol \beta_0)\big)> 0$.  \label{ch2:C:An}
\item   The penalty function $p_{\lambda}(\cdot)$ is non-negative on $\mathbb R^+$, continuously differentiable on $\mathbb R^+ \setminus\{0\}$ with derivative ${p}_\lambda'$ assumed to be a Lipschitz function on $\mathbb R^+\setminus\{0\}$.
Furthermore, given $(\lambda_{n,j})_{n \geq 1}, \mbox{ for } j=1, \ldots, s,$ we assume that there exists $(\tilde r_{n,j})_{n \geq 1}$, where $\tilde r_{n,j}\sqrt{|D_n|/p_n} \to \infty$ as $n \to \infty$, such that, for $n$ sufficiently large, $p_{\lambda_{n,j}}$ is thrice continuously differentiable in the ball centered at $|\beta_{0j}|$ with radius $\tilde r_{n,j}$ and we assume that the third derivative is uniformly bounded. \label{ch2:C:plambda}
\item $p_n^3 / |D_n| \to 0$ as $n \to \infty$. \label{ch2:C9}
\end{enumerate}

Conditions ($\mathcal C$.\ref{ch2:C:Dn})-($\mathcal C$.\ref{ch2:C:plambda}) are quite similar to the ones required by \citet{choiruddin2017convex} in the setting when the number of parameters to estimate is fixed. Condition~($\mathcal C$.\ref{ch2:C:Theta}) is slightly stronger since we have to ensure that $ \rho(u; \boldsymbol \beta)$ is finite for $\boldsymbol \beta$ in the neighborhood of $\boldsymbol \beta_0$. Note that $\sup_{u \in \mathbb{R}^d} \rho(u; \boldsymbol \beta_0) < \infty$ follows directly from condition~($\mathcal C$.\ref{ch2:C:cov}). We derive asymptotic properties when both $|D_n|$ and $p_n$ tend to infinity with $n$. However, to obtain an estimator which is consistent and has two other properties: sparsity and asymptotic normality, we need that the number of covariates does not grow too fast with respect to the volume of the observation domain. This condition is  stated by condition~($\mathcal C$.\ref{ch2:C9}) which is similar to the one required by \cite{fan2004nonconcave} when $|D_n|$ is simply replaced by $n$ (the sample size in their context).

\subsection{Main results} \label{sec:result}
We state our main results here. Proofs are relegated to Appendices~\ref{ch2:auxLemma}-\ref{ch2:proof2}.

We first show in Theorem~\ref{ch2:the1} that the regularized Poisson estimator converges in probability and exhibits its rate of convergence.

\begin{theorem}
\label{ch2:the1}
Assume the conditions ($\mathcal C$.\ref{ch2:C:Dn})-($\mathcal C$.\ref{ch2:C:mixing}) and ($\mathcal C$.\ref{ch2:C:An})-($\mathcal C$.\ref{ch2:C9}) hold. Let $a_n$ and $c_n$ be given respectively by (\ref{ch2:eq:an}) and~\eqref{ch2:eq:cn}. If $a_n=O(|D_n|^{-1/2})$ and $c_n=o(1)$, then there exists a local maximizer ${\boldsymbol {\hat \beta}}$ of $Q_n(w;\boldsymbol \beta)$  such that ${\bf \| \boldsymbol {\hat \beta} -\boldsymbol \beta_0\|}=O_\mathrm{P}\big(\sqrt {p_n} (|D_n|^{-1/2}+a_n)\big)$.
\end{theorem}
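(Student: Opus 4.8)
The plan is to follow the classical device of \citet{fan2001variable} and \citet{fan2004nonconcave} for establishing the existence of a consistent penalized estimator, transposed to the Campbell-based Poisson objective. Put $\alpha_n=\sqrt{p_n}\,(|D_n|^{-1/2}+a_n)$ and, for $\mathbf u\in\mathbb R^{p_n}$, set $\Delta_n(\mathbf u)=Q_n(w;\boldsymbol\beta_0+\alpha_n\mathbf u)-Q_n(w;\boldsymbol\beta_0)$. Because $\Delta_n(\mathbf 0)=0$ and $Q_n(w;\cdot)$ is continuous, it is enough to show that for every $\epsilon>0$ there is a constant $C>0$ such that, for all $n$ large,
\[
\mathrm P\Big(\sup_{\|\mathbf u\|=C}\Delta_n(\mathbf u)<0\Big)\ge 1-\epsilon .
\]
On that event $Q_n(w;\cdot)$ admits a local maximizer in the open ball $\{\boldsymbol\beta_0+\alpha_n\mathbf u:\|\mathbf u\|<C\}$ (which is contained in $\Theta$ for $n$ large since $\alpha_n C\to0$, a consequence of ($\mathcal C$.\ref{ch2:C9})); this is the claimed $\boldsymbol{\hat\beta}$, and $\|\boldsymbol{\hat\beta}-\boldsymbol\beta_0\|=O_\mathrm P(\alpha_n)$.

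To bound $\Delta_n(\mathbf u)$ I would split it into a log-likelihood increment and a penalty increment. For the former, a second-order Taylor expansion of the smooth function $\ell_n(w;\cdot)$ with the Lagrange remainder gives
\[
\ell_n(w;\boldsymbol\beta_0+\alpha_n\mathbf u)-\ell_n(w;\boldsymbol\beta_0)=\alpha_n\,\mathbf u^\top\mathbf U_n-\tfrac12\,\alpha_n^2\,\mathbf u^\top\mathbf A_n(w;\tilde{\boldsymbol\beta})\,\mathbf u ,
\]
where $\mathbf U_n=\sum_{u\in\mathbf X\cap D_n}w(u)\mathbf z(u)-\int_{D_n}w(u)\mathbf z(u)\rho(u;\boldsymbol\beta_0)\mathrm du$ is centred by Campbell's formula~\eqref{ch2: eq:campbell} (note that the Hessian of $\ell_n$ is the deterministic matrix $-\mathbf A_n(w;\boldsymbol\beta)$) and $\tilde{\boldsymbol\beta}$ lies on the segment from $\boldsymbol\beta_0$ to $\boldsymbol\beta_0+\alpha_n\mathbf u$. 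Under ($\mathcal C$.\ref{ch2:C:Theta})--($\mathcal C$.\ref{ch2:C:mixing}), a standard computation of the variance of a linear statistic of $\mathbf X$ yields $\mathbb E\|\mathbf U_n\|^2=\mathrm{tr}\{\mathbf B_n(w;\boldsymbol\beta_0)+\mathbf C_n(w;\boldsymbol\beta_0)\}=O(p_n|D_n|)$, the $\mathbf C_n$-part being $O(p_n|D_n|)$ because the mixing decay ($\mathcal C$.\ref{ch2:C:mixing}) forces $\int_{D_n}\int_{D_n}|g(u,v)-1|\,\mathrm du\,\mathrm dv=O(|D_n|)$. Hence $\|\mathbf U_n\|=O_\mathrm P(\sqrt{p_n|D_n|})$ and $|\alpha_n\mathbf u^\top\mathbf U_n|=O_\mathrm P(p_n\|\mathbf u\|)$, using $\alpha_n\sqrt{p_n|D_n|}=p_n(1+|D_n|^{1/2}a_n)=O(p_n)$. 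For the quadratic term, Cauchy--Schwarz and ($\mathcal C$.\ref{ch2:C:cov}) give $\sup_u|\mathbf z(u)^\top(\tilde{\boldsymbol\beta}-\boldsymbol\beta_0)|=O(\sqrt{p_n}\,\alpha_n C)\to0$ (here $p_n^2/|D_n|\to0$, implied by ($\mathcal C$.\ref{ch2:C9}), is used), so $\rho(\cdot;\tilde{\boldsymbol\beta})/\rho(\cdot;\boldsymbol\beta_0)\to1$ uniformly and $\mathbf u^\top\mathbf A_n(w;\tilde{\boldsymbol\beta})\mathbf u\ge(1+o(1))\,\nu_{\min}(\mathbf A_n(w;\boldsymbol\beta_0))\,\|\mathbf u\|^2\ge c\,|D_n|\,\|\mathbf u\|^2$ for some $c>0$ by ($\mathcal C$.\ref{ch2:C:An}); since $\alpha_n^2|D_n|\ge p_n$, this term is $\le-\tfrac c2(1+o(1))\,p_n\|\mathbf u\|^2$.

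For the penalty increment, the terms with $j>s$ equal $-|D_n|\,p_{\lambda_{n,j}}(|\alpha_n u_j|)\le 0$ (since $\beta_{0j}=0$, $p_\lambda\ge 0$, $p_\lambda(0)=0$) and may be discarded, which only enlarges $\Delta_n(\mathbf u)$. For $j\le s$, $|\beta_{0j}|$ is bounded away from $0$ (recall $s$ is fixed) and $\alpha_n C<\tilde r_{n,j}$ for $n$ large, so ($\mathcal C$.\ref{ch2:C:plambda}) permits a second-order expansion of $p_{\lambda_{n,j}}$ with uniformly bounded third derivative; using $a_n=O(|D_n|^{-1/2})$, $c_n=o(1)$ and $p_n/|D_n|\to0$, the $j\le s$ contribution is, in modulus, at most
\[
|D_n|\alpha_n\sqrt s\,a_n\|\mathbf u\|+\tfrac12|D_n|\alpha_n^2 c_n\|\mathbf u\|^2+O\!\big(|D_n|\alpha_n^3\|\mathbf u\|^3\big)=O(\sqrt{p_n}\,\|\mathbf u\|)+o(p_n\|\mathbf u\|^2) .
\]
Combining, on an event of probability at least $1-\epsilon$ for $n$ large we get $\Delta_n(\mathbf u)\le K\,p_n\|\mathbf u\|-\tfrac c2(1+o(1))\,p_n\|\mathbf u\|^2+o(p_n\|\mathbf u\|^2)+O(\sqrt{p_n}\,\|\mathbf u\|)$, with $K=K(\epsilon)$ not depending on $C$. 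Dividing by $p_n\|\mathbf u\|$ along the sphere $\|\mathbf u\|=C$, the negative curvature term dominates once $C$ is chosen large, so $\sup_{\|\mathbf u\|=C}\Delta_n(\mathbf u)<0$, completing the proof.

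The step I expect to be the main obstacle is the control of the two $p_n$-dimensional objects that are innocuous in the fixed-$p$ setting of \citet{choiruddin2017convex}: first, the bound $\mathrm{tr}\{\mathbf B_n+\mathbf C_n\}=O(p_n|D_n|)$ on the dispersion of the score, which hinges on showing that ($\mathcal C$.\ref{ch2:C:rhok})--($\mathcal C$.\ref{ch2:C:mixing}) make $g-1$ integrable with $\int_{D_n}\int_{D_n}|g(u,v)-1|\,\mathrm du\,\mathrm dv=O(|D_n|)$; and second, the replacement of the intermediate-point Hessian $\mathbf A_n(w;\tilde{\boldsymbol\beta})$ by $\mathbf A_n(w;\boldsymbol\beta_0)$, which requires uniform smallness of $\mathbf z(u)^\top(\tilde{\boldsymbol\beta}-\boldsymbol\beta_0)$ and is precisely where ($\mathcal C$.\ref{ch2:C9}) is used. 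Finally, it is the competition between the linear score term of order $p_n\|\mathbf u\|$ and the curvature term of order $p_n\|\mathbf u\|^2$ — both carrying the same power of $p_n$ — that forces the rate $\sqrt{p_n}\,(|D_n|^{-1/2}+a_n)$ rather than the $|D_n|^{-1/2}+a_n$ familiar from the fixed-$p$ theory.
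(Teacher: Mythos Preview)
Your proposal is correct and follows essentially the same route as the paper's proof: the same localization on a sphere of radius $\alpha_n C$, the same decomposition into a log-likelihood increment and a penalty increment, the same score bound $\|\mathbf U_n\|=O_{\mathrm P}(\sqrt{p_n|D_n|})$ obtained from $\mathrm{tr}(\mathbf B_n+\mathbf C_n)=O(p_n|D_n|)$ (this is exactly the paper's Lemma~\ref{ch2:bound}), the discarding of the $j>s$ penalty terms, and the Taylor expansion of $p_{\lambda_{n,j}}$ for $j\le s$ using $a_n$, $c_n$ and the bounded third derivative from ($\mathcal C$.\ref{ch2:C:plambda}). The only notable variation is in how you replace the intermediate-point Hessian: you argue $\rho(\cdot;\tilde{\boldsymbol\beta})/\rho(\cdot;\boldsymbol\beta_0)\to 1$ uniformly (which needs only $\sqrt{p_n}\,\alpha_n\to 0$, i.e.\ $p_n^2/|D_n|\to 0$), whereas the paper bounds $\|\mathbf A_n(w;\boldsymbol\beta_0)-\mathbf A_n(w;\boldsymbol\beta_0+t d_n\mathbf k)\|$ directly and then invokes $p_n d_n=o(1)$ from ($\mathcal C$.\ref{ch2:C9}); both lead to the same conclusion under the stated assumptions.
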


This implies that, the regularized Poisson estimator is root-$(|D_n|/p_n)$ consistent. Note that, as expected, the convergence rate is $\sqrt{p_n}$ times the convergence rate of the estimator obtained when $p$ is fixed \citep[see Theorem 1][]{choiruddin2017convex}. In addition, when we compare our results with the ones obtained by \cite{fan2004nonconcave}, who also considered a diverging number of parameters setting, our estimator has the same rate of convergence when we replace $|D_n|$ by $n$ to their context. This rate of convergence also appears in other contexts considering diverging number of parameters setting \cite[see e.g.][]{lam2008profile, zou2009adaptive, li2011nonconcave, cho2013model, wang2017variable}.


Now, we demonstrate in Theorem~\ref{ch2:the2} that such a root-$(|D_n|/p_n)$ consistent estimator ensures the sparsity of $\boldsymbol{\hat \beta}$; that is, the estimate will correctly set $\boldsymbol \beta_2$ to zero with probability tending to 1 as $n \to \infty$, and $\boldsymbol{\hat \beta}_1$ is asymptotically normal.

\begin{theorem}
\label{ch2:the2}
Assume the conditions ($\mathcal C$.\ref{ch2:C:Dn})-($\mathcal C$.\ref{ch2:C9}) are satisfied. If $a_n \sqrt{|D_n|}\to \nolinebreak 0$, \linebreak $b_n \sqrt{|D_n|/p_n^2} \to \infty$ and $ c_n \sqrt{p_n} \to 0$ as $n\to\infty$, the root-$(|D_n|/p_n)$ consistent local maximizer ${ \boldsymbol {\hat { \beta}}}=(\boldsymbol{\hat \beta}_1^\top, \boldsymbol{\hat \beta}_2^\top)^\top $ in Theorem~\ref{ch2:the1} satisfies:
\begin{enumerate}[(i)]
\item Sparsity: $\mathrm{P}(\boldsymbol{\hat \beta}_2=0) \to 1$ as $n \to \infty$,
\item Asymptotic Normality: $|D_n|^{1/2} \boldsymbol \Sigma_n(w;\boldsymbol{\beta}_{0})^{-1/2}(\boldsymbol{\hat \beta}_1- \boldsymbol{\beta}_{01})\xrightarrow{d} \mathcal{N}(0, \mathbf{I}_{s})$,
\end{enumerate}
where
\begin{align}
\boldsymbol \Sigma_n(w;\boldsymbol{\beta}_{0})= & |D_n|\{\mathbf{A}_{n,11}(w;\boldsymbol{\beta}_{0})+|D_n| \boldsymbol \Pi_n \}^{-1}\{\mathbf{B}_{n,11}(w;\boldsymbol{\beta}_{0})+\mathbf{C}_{n,11}(w;\boldsymbol{\beta}_{0})\}\nonumber \\
& \{\mathbf{A}_{n,11}(w;\boldsymbol{\beta}_{0})+|D_n| \boldsymbol \Pi_n \}^{-1}, \label{ch2:eq:Sigman} \\
\boldsymbol \Pi_n = & \mathrm{diag}\{p''_{\lambda_{n,1}}(|\beta_{01}|),\ldots,p''_{\lambda_{n,s}}(|\beta_{0s}|)\}. \label{ch2:eq:pi}
\end{align}
As a consequence, $\boldsymbol \Sigma_n(w;\boldsymbol{\beta}_{0})$ is the asymptotic covariance matrix of $\boldsymbol{\hat \beta}_1$. Here, $\boldsymbol \Sigma_n(w;\boldsymbol{\beta}_{0})^{-1/2}$ is the inverse of $\boldsymbol \Sigma_n(w;\boldsymbol{\beta}_{0})^{1/2}$, where $\boldsymbol \Sigma_n(w;\boldsymbol{\beta}_{0})^{1/2}$ is any square matrix with $\boldsymbol \Sigma_n(w;\boldsymbol{\beta}_{0})^{1/2}\big(\boldsymbol \Sigma_n(w;\boldsymbol{\beta}_{0})^{1/2}\big)^\top=\boldsymbol \Sigma_n(w;\boldsymbol{\beta}_{0})$.
\end{theorem}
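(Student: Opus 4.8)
The plan is to follow the now-standard two-part argument for penalized estimators with a diverging number of parameters, adapting the proof of Theorem 2 in \citet{choiruddin2017convex} to the present regime and relying on the moment and covariance bounds that are already in force through conditions ($\mathcal C$.\ref{ch2:C:Dn})--($\mathcal C$.\ref{ch2:C9}) and the auxiliary lemmas of Appendix~\ref{ch2:auxLemma}. Throughout, I would write $\boldsymbol{\hat\beta}=(\boldsymbol{\hat\beta}_1^\top,\boldsymbol{\hat\beta}_2^\top)^\top$ for the root-$(|D_n|/p_n)$ consistent local maximizer produced by Theorem~\ref{ch2:the1} (which is available since the hypothesis $a_n\sqrt{|D_n|}\to0$ implies $a_n=O(|D_n|^{-1/2})$ and $c_n\sqrt{p_n}\to0$ implies $c_n=o(1)$).

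\emph{Part (i): sparsity.} I would show that, on the event that $\|\boldsymbol{\hat\beta}-\boldsymbol\beta_0\|\le\epsilon_n$ with $\epsilon_n=K_1\sqrt{p_n/|D_n|}$ (which has probability tending to $1$ by Theorem~\ref{ch2:the1}), any such local maximizer must have $\boldsymbol{\hat\beta}_2=0$. The mechanism is the usual sign argument: for $j\in\{s+1,\dots,p_n\}$ and $\beta_j$ in the $\epsilon_n$-ball but nonzero, write
\begin{align*}
\frac{\partial Q_n(w;\boldsymbol\beta)}{\partial\beta_j}=\frac{\partial\ell_n(w;\boldsymbol\beta)}{\partial\beta_j}-|D_n|\,p'_{\lambda_{n,j}}(|\beta_j|)\operatorname{sign}(\beta_j).
\end{align*}
A Taylor expansion of $\partial\ell_n/\partial\beta_j$ about $\boldsymbol\beta_0$ together with Campbell's formula \eqref{ch2: eq:campbell}, condition ($\mathcal C$.\ref{ch2:C:cov}), and the variance bound coming from ($\mathcal C$.\ref{ch2:C:rhok})--($\mathcal C$.\ref{ch2:C:mixing}) gives $|D_n|^{-1}\,\partial\ell_n(w;\boldsymbol\beta)/\partial\beta_j=O_{\mathrm P}(\sqrt{p_n/|D_n|}\,)$ uniformly in $j$ and in $\boldsymbol\beta$ in the $\epsilon_n$-ball — this is where the $p_n$ factors accumulate and where I would invoke the relevant auxiliary lemma. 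Dividing the display by $|D_n|$, the penalty term is bounded below in absolute value by $b_n$, and the hypothesis $b_n\sqrt{|D_n|/p_n^2}\to\infty$ (equivalently $b_n/\sqrt{p_n/|D_n|}\cdot(1/\sqrt{p_n})\to\infty$, hence a fortiori $b_n\gg\sqrt{p_n/|D_n|}$) forces $\operatorname{sign}(\partial Q_n/\partial\beta_j)=-\operatorname{sign}(\beta_j)$ with probability tending to $1$. Hence $Q_n$ cannot have an interior maximum at a point with $\beta_j\neq0$, so $\boldsymbol{\hat\beta}_2=0$.

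\emph{Part (ii): asymptotic normality.} Conditional on the sparsity event, $\boldsymbol{\hat\beta}_1$ solves the $s$-dimensional penalized score equation $\partial Q_n(w;(\boldsymbol\beta_1^\top,\mathbf0^\top)^\top)/\partial\boldsymbol\beta_1=0$. I would Taylor-expand the score $\mathbf U_n(\boldsymbol\beta_1):=\partial\ell_n(w;(\boldsymbol\beta_1,\mathbf0))/\partial\boldsymbol\beta_1$ around $\boldsymbol\beta_{01}$ to second order, obtaining
\begin{align*}
\mathbf 0=\mathbf U_n(\boldsymbol\beta_{01})-\{\mathbf A_{n,11}(w;\boldsymbol\beta_0)+R_n\}(\boldsymbol{\hat\beta}_1-\boldsymbol\beta_{01})-|D_n|\{\mathbf b_n+\boldsymbol\Pi_n(\boldsymbol{\hat\beta}_1-\boldsymbol\beta_{01})\}+\text{h.o.t.},
\end{align*}
where $\mathbf A_{n,11}$ comes from $\mathbb E[-\partial^2\ell_n/\partial\boldsymbol\beta_1^2]$ via Campbell, $R_n$ is the centered Hessian which one shows is $o_{\mathrm P}(|D_n|)$ entrywise with $\|R_n\|=o_{\mathrm P}(|D_n|/\sqrt{p_n})$ using ($\mathcal C$.\ref{ch2:C:rhok})--($\mathcal C$.\ref{ch2:C:mixing}), $\mathbf b_n$ is the vector with entries $p'_{\lambda_{n,j}}(|\beta_{0j}|)\operatorname{sign}(\beta_{0j})$ which is $O(a_n)=o(|D_n|^{-1/2})$, $\boldsymbol\Pi_n$ is \eqref{ch2:eq:pi}, and the third-order remainder is controlled by ($\mathcal C$.\ref{ch2:C:plambda}) and the bound on $\|\boldsymbol{\hat\beta}_1-\boldsymbol\beta_{01}\|$ together with $c_n\sqrt{p_n}\to0$. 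Solving for $\boldsymbol{\hat\beta}_1-\boldsymbol\beta_{01}$ and multiplying by $|D_n|^{1/2}\boldsymbol\Sigma_n^{-1/2}$, the deterministic and remainder terms vanish (their contribution is $o_{\mathrm P}(1)$ after normalization, using ($\mathcal C$.\ref{ch2:C:BnCn}), ($\mathcal C$.\ref{ch2:C:An}) to bound $\boldsymbol\Sigma_n$ away from $0$ and $\infty$ in operator norm and $p_n^3/|D_n|\to0$), leaving
\begin{align*}
|D_n|^{1/2}\boldsymbol\Sigma_n(w;\boldsymbol\beta_0)^{-1/2}(\boldsymbol{\hat\beta}_1-\boldsymbol\beta_{01})=|D_n|^{-1/2}\boldsymbol\Sigma_n^{-1/2}\{\mathbf A_{n,11}+|D_n|\boldsymbol\Pi_n\}^{-1}\!\!\cdot\{\mathbf A_{n,11}+|D_n|\boldsymbol\Pi_n\}\,|D_n|^{1/2}(\ldots)+o_{\mathrm P}(1),
\end{align*}
which reduces to a normalized linear functional of the centered score $\mathbf U_n(\boldsymbol\beta_{01})$. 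I would finish by a Cramér--Wold device: for a fixed unit vector $\boldsymbol\xi\in\mathbb R^s$, the scalar $\boldsymbol\xi^\top|D_n|^{-1/2}\mathbf U_n(\boldsymbol\beta_{01})$ is a sum over $\mathbf X$ of a triangular array, its variance converges to $\boldsymbol\xi^\top\{|D_n|^{-1}(\mathbf B_{n,11}+\mathbf C_{n,11})\}\boldsymbol\xi$ by \eqref{ch2:eq:campbell2}, and asymptotic normality follows from the central limit theorem for $\alpha$-mixing point processes (of Bolthausen/Guyon type, as used by \citet{waagepetersen2007estimating, coeurjolly14}) whose hypotheses are exactly ($\mathcal C$.\ref{ch2:C:rhok})--($\mathcal C$.\ref{ch2:C:mixing}).

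\emph{Main obstacle.} The delicate point is not the structure of the argument but the bookkeeping of the $p_n$ factors: one must verify that the centered Hessian remainder $R_n$, the third-order Taylor remainder of $\ell_n$, and the penalty third-derivative term are all negligible \emph{after} multiplication by $|D_n|^{1/2}\boldsymbol\Sigma_n^{-1/2}$, i.e. relative to the $\sqrt{|D_n|/p_n}$ scale, and this is precisely where the full strength of ($\mathcal C$.\ref{ch2:C9}), namely $p_n^3/|D_n|\to0$, is consumed (a second-order term is $O_{\mathrm P}(p_n/|D_n|)$, which after the $\sqrt{|D_n|}$ normalization is $O_{\mathrm P}(p_n/\sqrt{|D_n|})=o_{\mathrm P}(\sqrt{p_n})$, still too large unless one is careful, so the argument must exploit cancellation in the quadratic form rather than crude norm bounds). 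Handling these remainder terms uniformly — and making the CLT for mixing point processes apply to a dimension-growing linear functional via Cramér--Wold with the rate tracked — is the part I expect to require the most care.
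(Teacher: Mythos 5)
Your proposal follows essentially the same route as the paper: part (i) is the sign argument on $\partial Q_n/\partial\beta_j$ for $j=s+1,\dots,p_n$ over the $\epsilon_n$-ball (the paper's Lemma~\ref{ch2:sparsity}), and part (ii) is the Taylor expansion of the penalized score restricted to the first $s$ coordinates, normalized by $\{\mathbf{B}_{n,11}+\mathbf{C}_{n,11}\}^{-1/2}$ and closed with a triangular-array CLT for mixing random fields plus Slutsky (the paper's Lemma~\ref{ch2:clt}).

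Two of your quantitative remarks are miscalibrated, although neither breaks the argument because you invoke the full-strength hypotheses anyway. First, in part (i) the correct uniform bound is $|D_n|^{-1}\,\partial\ell_n(w;\boldsymbol\beta)/\partial\beta_j=O_{\mathrm P}\big(p_n/\sqrt{|D_n|}\big)$, not $O_{\mathrm P}\big(\sqrt{p_n/|D_n|}\big)$: the remainder $\int_{D_n}w(u)z_j(u)\{\rho(u;\boldsymbol\beta)-\rho(u;\boldsymbol\beta_0)\}\mathrm{d}u$ is controlled by $\|\boldsymbol\beta-\boldsymbol\beta_0\|\,\|\mathbf z(u)\|=O\big(\sqrt{p_n/|D_n|}\cdot\sqrt{p_n}\big)$, and this extra $\sqrt{p_n}$ is precisely why the theorem assumes $b_n\sqrt{|D_n|/p_n^2}\to\infty$ rather than the weaker Fan--Peng condition $b_n\sqrt{|D_n|/p_n}\to\infty$ that your claimed rate would permit (the paper comments on exactly this trade-off after Theorem~\ref{ch2:the2}). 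Second, the concern in your ``main obstacle'' paragraph is unfounded: the $O_{\mathrm P}(p_n/|D_n|)$ second-order penalty term, multiplied by $|D_n|$ and then by $\mathbf M_n$ with $\|\mathbf M_n\|=O(|D_n|^{-1/2})$ from ($\mathcal C$.\ref{ch2:C:BnCn}), contributes $O_{\mathrm P}\big(\sqrt{p_n^2/|D_n|}\big)$, which already tends to $0$ under ($\mathcal C$.\ref{ch2:C9}); crude norm bounds suffice and no cancellation in the quadratic form is needed. Likewise, the CLT is applied only to the first $s$ components with $s$ fixed, so there is no dimension-growing functional to track through the Cram\'er--Wold device. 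Finally, note that the paper's handling of the penalty-gradient increment $\phi_{n,j}$ requires the explicit split on the event $\{|\hat\beta_j-\beta_{0j}|\leq\tilde r_{n,j}\}$ from ($\mathcal C$.\ref{ch2:C:plambda}), a detail your sketch glosses over but would need in a full write-up.
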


\begin{remark}
\label{ch2:remark 1}
For lasso and adaptive lasso, $ \boldsymbol \Pi_n= \mathbf 0$. For other penalties, since $c_n=o(1)$, then $\boldsymbol \|  \boldsymbol \Pi_n\|=o(1)$. Since $\|\mathbf{A}_{n,11}(w;\boldsymbol{\beta}_{0})\|=O(|D_n|)$ from conditions ($\mathcal C$.\ref{ch2:C:Dn})-($\mathcal C$.\ref{ch2:C:cov}), $|D_n| \, \|  \boldsymbol \Pi_n\|$ is asymptotically negligible with respect to $\|\mathbf{A}_{n,11}(w;\boldsymbol{\beta}_{0})\|$.
\end{remark}

\begin{remark}
\label{ch2:Rlogistic}
Theorems \ref{ch2:the1} and \ref{ch2:the2} remain true for the regularized logistic regression estimator if we replace in the expression of the matrices $\mathbf{A}_n, \mathbf{B}_n,$ and $\mathbf{C}_n$, $w(u)$ by ${w(u) \delta(u)}/ ({\rho(u; \boldsymbol \beta_0) + \delta(u)}), u \in D_n$ and extend the condition ($\mathcal C$.\ref{ch2:C:cov}) by adding $\sup_{u \in \mathbb{R}^d} \delta(u)<\infty$.
	
The proofs of Theorems \ref{ch2:the1} and \ref{ch2:the2} for this estimator are slightly different mainly because unlike the Poisson likelihood for which we have $\ell_n^{2}(w;\boldsymbol{\beta})=-\mathbf{A}_{n}(w;\boldsymbol{\beta})$, for the regularized logistic regre $\ell_n^{2}(w;\boldsymbol{\beta})$ is now stochastic and we only have $\mathbb{E}(\ell_n^{2}(w;\boldsymbol{\beta}))=-\mathbf{A}_{n}(w;\boldsymbol{\beta})$. Despite the additional difficulty, we maintain that no additional assumption is required. 
\end{remark}

We show in Theorem \ref{ch2:the2} that the sparsity and asymptotic normality are still valid when the number of parameters diverges. By Remark~\ref{ch2:remark 1}, when $n$ is large enough, $\boldsymbol \Sigma_n(w;\boldsymbol{\beta}_{0})$ in \eqref{ch2:eq:Sigman} becomes approximately
\begin{align*}
|D_n|\{\mathbf{A}_{n,11}(w;\boldsymbol{\beta}_{0})\}^{-1}\{\mathbf{B}_{n,11}(w;\boldsymbol{\beta}_{0})+\mathbf{C}_{n,11}(w;\boldsymbol{\beta}_{0})\} \{\mathbf{A}_{n,11}(w;\boldsymbol{\beta}_{0}) \}^{-1},
\end{align*}
which is precisely the asymptotic covariance matrix of the estimator of $\boldsymbol \beta_{01}$ obtained by maximizing the likelihood function or solving estimating equations based on the submodel knowing that $\boldsymbol \beta_{02}=\mathbf{0}$. This shows that when $n$ is sufficiently large, our estimator is as efficient as the oracle one.

To satisfy Theorem~\ref{ch2:the2}, we require that $a_n \sqrt{|D_n|} \to 0$, $b_n \sqrt{|D_n|/p_n^2} \to \infty$ and $ c_n \sqrt{p_n}\to 0$ as $n \to \infty$ simultaneously. In particular, conditions on $a_n$ and $c_n$ ensure the asymptotic normality of $\boldsymbol{\hat \beta}_1$ while condition on $b_n$ is used to prove the sparsity. Conditions regarding $a_n$ and $c_n$ are similar to the ones imposed by \cite{fan2004nonconcave} when $|D_n|$ is replaced by $n$ in their context. However, we require a slightly stronger condition on $b_n$ than the one required by \cite{fan2004nonconcave} which in the present setting could be written as $b_n \sqrt{|D_n|/p_n} \to \infty$. As compensation, we do not need to impose, as \cite{fan2004nonconcave} did, for any $0 < K_2 < \infty$,  $\nu_{\max}\big(|D_n|^{-1}\mathbf{A}_{n}(w;\boldsymbol \beta_0)\big) < K_2$. Such a condition is not straightforwardly satisfied in our setting since the other conditions only imply that $\nu_{\max}\big(|D_n|^{-1}\mathbf{A}_{n}(w;\boldsymbol \beta_0)\big)=O({p_n})$.

Further details regarding $a_n$, $b_n$ and $c_n$ for each method are presented in Table~\ref{ch2:anbncn}. For the ridge regularization method, $b_n=0$, preventing from applying Theorem~\ref{ch2:the2} for this penalty. For lasso and elastic net, $a_n=K_3 b_n$ for some constant $K_3>0$ ($K_3$=1 for lasso). The two conditions $a_n \sqrt{|D_n|} \to 0$ and $b_n \sqrt{|D_n|/p_n^2} \to \infty$ as $n \to \infty$ cannot be satisfied simultaneously. This is different for the adaptive versions where a compromise can be found by adjusting the $\lambda_{n,j}$'s, as well as the two non-convex penalties SCAD and MC+, for which $\lambda_n$ can be adjusted. For the regularization methods we consider in this study, the condition $ c_n \sqrt{p_n}\to 0$ is implied by the condition $a_n \sqrt{|D_n|}\to 0$ as $n\to \infty$ and condition~($\mathcal C$.\ref{ch2:C9}).

\setlength{\tabcolsep}{3pt}
\renewcommand{\arraystretch}{1.5}
\begin{table}[!htb]
\caption{Details of the sequences $a_n$, $b_n$ and $c_n$ for a given regularization method.}
\label{ch2:anbncn}
\begin{center}
\begin{threeparttable}
\begin{tabular}{ l l l l }
\hline
\hline
Method &$a_n$ & $b_n$ & $c_n$\\
\hline
\hline
Ridge & $\lambda_n {\displaystyle \max_{j=1,...s} \{|\beta_{0j}|\}}$  & $0$ & $\lambda_n$ \\
Lasso & $\lambda_n$ & $\lambda_n$ & 0  \\
Enet & $\lambda_n \left[(1-\gamma){\displaystyle \max_{j=1,...s} \{|\beta_{0j}|\}}+\gamma \right]$  &$\gamma \lambda_n$& $(1-\gamma) \lambda_n$\\
AL & ${\displaystyle \max_{j=1,...s} \{\lambda_{n,j}\}}$ & ${\displaystyle \min_{j=s+1,...p_n} \{\lambda_{n,j}\}}$ & 0\\ 
Aenet & $ {\displaystyle \max_{j=1,...s}\{ \lambda_{n,j}\big((1-\gamma) |\beta_{0j}|+\gamma\big) \}}$  & $\gamma {\displaystyle \min_{j=s+1,...p_n}\{\lambda_{n,j} \}}$ & $(1-\gamma){\displaystyle\max_{j=1,\dots,s} \{\lambda_{n,j}\}}$\\ 
SCAD & $0 \tnote{*}$ & $\lambda_n  \tnote{**}$  & $0\tnote{*}$\\ 
MC+ & $0  \tnote{*}$ & $\lambda_n - \frac {K_1 \sqrt{p_n}} {\gamma \sqrt{|D_n|}}  \tnote{**}$ &$0\tnote{*}$\\ 
\hline
\end{tabular}
\begin{tablenotes}
\item[*] if $\lambda_n \to 0$ for $n$ sufficient large 
\item[**] if $ \lambda_n \sqrt{|D_n|/p_n^2} \to \infty$ for $n$ sufficient large
\end{tablenotes}
\end{threeparttable}
\end{center}
\end{table}

\section{Numerical results}
\label{ch2:sec:num}
This section is devoted to present numerical results. More precisely, we conduct simulation experiments in Section~\ref{ch2:simul} to assess the finite sample peformance of our estimates and apply our method to an application in ecology in Section~\ref{ch2:sec8}. We apply the regularized Poisson likelihood (PL) and the regularized weighted Poisson likelihood (WPL) to select covariates and estimate their coefficients. Similar approach can be straightforwardly used for the regularized versions using logistic regression likelihood.

To numerically evaluate the parameters estimates, we apply Berman-Turner method \citep{berman1992approximating} combined with coordinate descent algorithm  \citep{friedman2007pathwise} to perform variable selection and parameter estimation. Berman-Turner device allows to show that maximizing \eqref{ch2:Pois} is equivalent to fitting a weighted Poisson generalized linear model, so the standard software for generalized linear models (GLMs) can be used. This  has been exploited by the $\texttt{spatstat}$  $\texttt{R}$ package \citep{baddeley2015spatial}. As we make links between spatial point processes intensity estimation and GLMs, we only have to deal with feature selection procedures for GLMs. Hence, we clearly have many advantages: the various computational strategies are carefully studied, and, in particular, efficiently implemented in $\texttt{R}$. In this study, to compute the regularization path solutions, we employ coordinate descent algorithm \citep{friedman2007pathwise}. This is implemented in the $\texttt{glmnet}$ \citep{friedman2010regularization} for regularization methods for GLMs using some convex penalties (i.e., ridge, lasso, elastic net, adaptive lasso and adaptive elastic net) and in the $\texttt{ncvreg}$ \citep{breheny2011coordinate} for regularization methods for GLMs using some non-convex penalties (i.e., SCAD and MC+). More details for computational strategies are discussed in detail  by \citet{choiruddin2017convex}.

Our methods rely on the tuning parameter $\lambda$. Some previous studies  \citep[see e.g.][]{zou2007degrees, wang2007tuning, wang2009shrinkage} suggest to use a modified BIC criterion to select the tuning parameter. We follow the literature and choose $\lambda$ by minimizing $\mathrm{WQBIC}(\lambda)$, a modified version of the BIC criterion, defined by
\begin{align*}
\mathrm{WQBIC}(\lambda)=-2 \ell(w;\boldsymbol{\hat \beta} (\lambda))+ s(\lambda)\log|D|,
\end{align*}
where $s(\lambda)={\sum_{j=1}^p \mathbb{I}\{{\hat \beta_j}(\lambda) \neq 0\}}$ is the number of selected covariates with non-zero regression coefficients and $|D|$ is the volume of observation domain. To implement the adaptive methods (i.e., adaptive lasso and adaptive elastic net), we follow \cite{zou2006adaptive} and define $\lambda_j=\lambda / | \tilde{\beta_j}(ridge)|, j =1, \cdots,p$, where $ \tilde{\boldsymbol \beta}(ridge)$ is the estimates obtained from ridge regression and $\lambda$ is a tuning parameter chosen by $\mathrm{WQBIC}(\lambda)$ criterion as described above. Following \citet{choiruddin2017convex}, we fix $\gamma=0.5$ for elastic net and its adaptive version, $\gamma=3.7$ for SCAD, and $\gamma=3$ for MC+. For further discussion regarding the selection of $\gamma$ for SCAD and MC+, see e.g. \cite{fan2001variable} and \cite{breheny2011coordinate}.

\subsection{Simulation study}
\label{ch2:simul}
In this section, we investigate the behavior of our estimators in a simulation experiment in different situations when a large number of covariates for fitting spatial point process intensity estimation is involved. We intend to extend the setting considered by \citet{choiruddin2017convex}. We start with relatively complex situation where strong multicollinearity is present (Scenarios~\ref{sce1a} and \ref{sce2a}) and we then consider a more complex setting using real datasets (Scenarios~\ref{sce1b} and \ref{sce2b}). We have two different scenarios (Scenarios~\ref{ch2:sce1} and \ref{ch2:sce2}) for which the number of true covariates as well as their coefficients are different.

The spatial domain we consider is $D=[0,1000] \times[0,500]$. The true intensity function has the form $\rho(u; \boldsymbol \beta_0)=\exp(\mathbf{z}(u)^\top \boldsymbol \beta_0)$, where $\mathbf{z}(u)=\{1, z_1(u),\ldots,z_{50}(u)\}^\top$ and $\boldsymbol \beta_0=\{\beta_0, \beta_{01}, \cdots, \beta_{050}\}$. We set  $\beta_0$ such that the mean number of points over $D$ is equal to $1600$. We consider two different scenarios described as follows.
\begin{enumerate}[{Scenario}~1.]
\item We define the true vector $\boldsymbol \beta_0=\{\beta_0, 2, 0.75, 0, \cdots,0\}$. To define the covariates, we center and scale the $201 \times 101$ pixel images of elevation ($x_1$) and gradient of elevation ($x_2$) contained in the $\texttt{bei}$ datasets of $\texttt{spatstat}$ library in $\texttt{R}$ 
 and use them as two true covariates. In addition, we create two settings to define extra covariates: \label{ch2:sce1}
\begin{enumerate}[a.]
\item  First, we generate 48 $201 \times 101$ pixel images of covariates as a standard Gaussian white noise and denote them by $x_3, \ldots, x_{50}$. Second, we transform them, together with $x_1$ and $x_2$, to have multicollinearity. In particular, we define $\mathbf{\tilde z}(u)=\mathbf{V} ^\top \mathbf{x}(u)$, where $\mathbf{x}(u)=\{x_1(u),\ldots,x_{50}(u)\}^\top$. More precisely, $\mathbf{V}$ is such that $\boldsymbol \Omega=\mathbf{V} ^\top \mathbf{V},$ and $(\Omega)_{ij}=(\Omega)_{ji}=0.7^{|i-j|}$ for $i,j=1,\ldots,50$, except $(\Omega)_{12}=(\Omega)_{21}=0$, to preserve the correlation between $x_1$ and $x_2$.   \label{sce1a} 
In this setting, $\mathbf{z}(u)=\{1,\mathbf{\tilde z}(u)\}$.

\item  We center and scale the 13 $50 \times 25$ pixel images of soil nutrients obtained from the study in tropical forest of Barro Colorado Island (BCI) in central Panama \citep[see][]{condit1998tropical, hubbell1999light, hubbell2005barro} and convert them to be  $201 \times 101$ pixel images as $x_1$ and $x_2$. In addition, we consider the interaction between two soil nutrients such that we have 50 covariates in total. We use 48 covariates (13 soil nutrients and 35 interactions between them) as the extra covariates. Together with $x_1$ and $x_2$, we keep the structure of the covariance matrix to preserve the complexity of the situation. In this setting, we have $\mathbf{z}(u)=\mathbf{x}(u)=\{1, x_1(u),\ldots,x_{50}(u)\}^\top$.   \label{sce1b} 
\end{enumerate}

\item In this setting, we consider five true covariates out of 50 covariates. In addition of elevation ($x_1$) and gradient of elevation ($x_2$), we convert $50 \times 25$ pixel images of concentration of Aluminium ($x_3$), Boron ($x_4$) and Calcium ($x_5$) in the soil to be $201 \times 101$ pixel images as $x_1$ and $x_2$ and set them to be other three true covariates. All five covariates are centered and scaled. We define the true coefficient vector $\boldsymbol \beta_0=\{\beta_0, 5, 4,  3, 2, 1, 0, \cdots,0\}$. As in Scenario~\ref{ch2:sce1}, we make two settings to define 45 extra covariates: \label{ch2:sce2}
\begin{enumerate}[a.]
\item This setting is similar to that of Scenario~\ref{sce1a}. We generate 45 $201 \times 101$ pixel images of covariates as standard Gaussian white noise, denote them by $x_6, \ldots, x_{50}$, and define $\mathbf{\tilde z}(u)=\mathbf{V} ^\top \mathbf{x}(u)$, where $\mathbf{V}$ is such that $\boldsymbol \Omega=\mathbf{V} ^\top \mathbf{V},$ and $(\Omega)_{ij}=(\Omega)_{ji}=0.7^{|i-j|}$ for $i,j=1,\ldots,50$, except $(\Omega)_{kl}=(\Omega)_{lk}=0,$ for $k,l=1,\cdots,5, k \neq l$, to preserve the correlation among $x_1$ - $x_5$. We still define $\mathbf{z}(u)=\{1,\mathbf{\tilde z}(u)\}$. \label{sce2a}

\item  We use the real dataset as in Scenario~\ref{sce1b} and consider similar setting. In this setting, we define 5 true covariates which have different regression coefficients as in  Scenario~\ref{sce1b}. \label{sce2b}
\end{enumerate}
\end{enumerate}

\setlength{\tabcolsep}{2.5pt}
\renewcommand{\arraystretch}{1.5}
\begin{table}[!ht]
\caption{Empirical selection properties (TPR, FPR, and PPV in $\%$) based on 2000 replications of Thomas processes on the domain $D$ for two different values of $\kappa$ and for the two different scenarios. Different penalty functions are considered as well as two estimating equations, the regularized Poisson likelihood (PL) and the regularized weighted Poisson likelihood (WPL).}
\label{ch2:table:sel}
\centering
\begin{threeparttable}
\scalebox{0.77}{
\begin{tabular}{ l | ccc | ccc | ccc | ccc}
\hline
\multirow{3}{*}{Method}  & \multicolumn{3}{c}{Regularized PL} & \multicolumn{3}{c}{Regularized WPL} & \multicolumn{3}{c}{Regularized PL} & \multicolumn{3}{c}{Regularized WPL} \\
\cline{2-13}
 & \multicolumn{6}{c}{$\kappa=5 \times 10^{-4}$} & \multicolumn{6}{c}{$\kappa=5 \times 10^{-5}$} \\
\cline{2-13}
 & TPR & FPR & PPV & TPR & FPR & PPV & TPR & FPR & PPV & TPR & FPR & PPV \\ 
\hline
 & \multicolumn{12}{c}{Scenario~\ref{sce1a}} \\
\hline
  Lasso & 100\tnote{1} & 13 & 28 & 96 & 4 & 62 & 97 & 23 & 20 & 64 & 1 & 76 \\ 
  Enet & 100\tnote{1} & 34 & 12 & 93 & 8 & 48 & 97 & 48 & 10 & 59 & 2 & 58 \\ 
  AL& 100\tnote{1} & 1 & 92 & 97 & 0\tnote{1} & 96 & 95 & 3 & 68 & 70 & 0\tnote{1} & 98 \\ 
  Aenet & 100\tnote{1} & 2 & 76 & 97 & 1 & 85 & 95 & 6 & 52 & 67 & 0\tnote{1} & 95 \\ 
  SCAD & 100\tnote{1} & 7 & 41 & 97 & 1 & 87 & 96 & 4 & 61 & 56 & 0\tnote{1} & 79 \\ 
  MC+ & 100\tnote{1} & 8 & 37 & 96 & 1 & 85 & 96 & 5 & 58 & 52 & 1 & 74 \\ 
 \hline
 & \multicolumn{12}{c}{Scenario~\ref{sce1b}} \\
\hline
  Lasso & 100\tnote{1} & 45 & 10 & 91 & 11 & 52 & 100\tnote{1} & 96 & 4 & 20 & 6 & 22 \\ 
  Enet & 100\tnote{1} & 63 & 7 & 87 & 18 & 31 & 100\tnote{1} & 98 & 4 & 15 & 6 & 14 \\ 
  AL & 100\tnote{1} & 26 & 19 & 95 & 5 & 81 & 99 & 85 & 5 & 26 & 5 & 35 \\ 
  Aenet & 100\tnote{1} & 30 & 15 & 95 & 6 & 74 & 100\tnote{1} & 87 & 5 & 24 & 5 & 30 \\ 
  SCAD & 100\tnote{1} & 26 & 18 & 93 & 5 & 76 & 100\tnote{1} & 76 & 5 & 23 & 4 & 28 \\ 
  MC+ & 100\tnote{1} & 26 & 17 & 93 & 5 & 76  & 99 & 76 & 5 & 22 & 5 & 27 \\ 
\hline
& \multicolumn{12}{c}{Scenario~\ref{sce2a}} \\
\hline
   Lasso & 98 & 93 & 10 & 84 & 73 & 14 & 98 & 96 & 10 & 47 & 35 & 16\\ 
  Enet & 99 & 98 & 10 & 85 & 80 & 11 & 99 & 98 & 10 & 46 & 38 & 12 \\ 
  AL & 95 & 49 & 18 & 83 & 35 & 27 & 95 & 64 & 15 & 50 & 23 & 28\\ 
  Aenet & 96 & 52 & 17 & 84 & 40 & 21 & 96 & 68 & 14 & 48 & 26 & 20 \\ 
  SCAD & 86 & 74 & 13 & 65 & 45 & 36 & 75 & 60 & 21 & 39 & 26 & 30 \\ 
  MC+ & 87 & 78 & 13 & 65 & 47 & 35 & 73 & 60 & 22 & 39 & 26 & 30 \\ 
\hline
& \multicolumn{12}{c}{Scenario~\ref{sce2b}} \\
\hline
   Lasso & 80 & 64 & 13 & 75 & 60 & 12 & 78 & 69 & 11 & 64 & 57 & 9 \\ 
  Enet & 85 & 73 & 12 & 82 & 69 & 11 & 84 & 79 & 11 & 68 & 64 & 8 \\ 
  AL & 56 & 26 & 19 & 54 & 25 & 20 & 59 & 35 & 17 & 48 & 30 & 13 \\ 
  Aenet & 59 & 30 & 18 & 57 & 29 & 18 & 64 & 43 & 15 & 52 & 36 & 11 \\ 
  SCAD & 43 & 21 & 20 & 42 & 20 & 23 & 46 & 24 & 27 & 41 & 25 & 16 \\ 
  MC+ & 44 & 21 & 20 & 43 & 20 & 23 & 46 & 24 & 26 & 41 & 26 & 16 \\
\hline
\end{tabular}}
\begin{tablenotes}
\item[1] Approximate value
\end{tablenotes}
\end{threeparttable}
\end{table}

With these scenarios, we simulate 2000 spatial point patterns from a Thomas point process using the $\mathtt{rThomas}$ function in the $\mathtt{spatstat}$ package. We set the interaction parameter $\kappa$ to be $\kappa=5 \times 10^{-4}, \kappa=5 \times 10^{-5}$ and let $\omega=20$. Briefly, smaller values of $\omega$ correspond to tighter clusters, and smaller values of $\kappa$ correspond to a fewer number of parents \citep[see e.g.][for further details regarding the  Thomas point process]{moller2003statistical}. For each scenario with different $\kappa$, we fit the intensity to the simulated point pattern realizations.

We report the performances of our estimates in terms of two characteristics: selection and prediction properties. We present the selection properties in Table~\ref{ch2:table:sel} and the prediction properties in Table~\ref{ch2:table:pred}

\setlength{\tabcolsep}{2.5pt}
\renewcommand{\arraystretch}{1.5}
\begin{table}[!ht]
\caption{Empirical prediction properties (Bias, SD, and RMSE) based on 2000 replications of Thomas processes on the domain $D$ for two different values of $\kappa$ and for the two different scenarios. Different penalty functions are considered as well as two estimating equations, the regularized Poisson likelihood (PL) and the regularized weighted Poisson likelihood (WPL).}
\label{ch2:table:pred}
\centering
\scalebox{0.8}{
\begin{tabular}{ l | ccc | ccc | ccc | ccc}
\hline
\multirow{3}{*}{Method}  & \multicolumn{3}{c}{Regularized PL} & \multicolumn{3}{c}{Regularized WPL} & \multicolumn{3}{c}{Regularized PL} & \multicolumn{3}{c}{Regularized WPL} \\
\cline{2-13}
 & \multicolumn{6}{c}{$\kappa=5 \times 10^{-4}$} & \multicolumn{6}{c}{$\kappa=5 \times 10^{-5}$} \\
\cline{2-13}
 & Bias & SD & RMSE & Bias & SD & RMSE & Bias & SD & RMSE & Bias & SD & RMSE \\ 
\cline{2-13}
 & \multicolumn{12}{c}{Scenario~\ref{sce1a}} \\
\hline
  Lasso & 0.19 & 0.19 & 0.27 & 0.43 & 0.29 & 0.52 & 0.29 & 0.60 & 0.67 & 0.94 & 0.53 & 1.08 \\ 
  Enet & 0.27 & 0.22 & 0.35 & 0.72 & 0.32 & 0.79 & 0.34 & 0.66 & 0.74 & 1.21 & 0.40 & 1.27 \\ 
  AL & 0.05 & 0.18 & 0.19 & 0.14 & 0.24 & 0.28 & 0.19 & 0.60 & 0.63 & 0.57 & 0.57 & 0.81 \\ 
  Aenet & 0.07 & 0.19 & 0.20 & 0.20 & 0.27 & 0.33 & 0.22 & 0.60 & 0.64 & 0.69 & 0.55 & 0.88 \\ 
  SCAD & 0.19 & 0.19 & 0.27 & 0.29 & 0.32 & 0.43 & 0.14 & 0.55 & 0.57 & 1.10 & 0.71 & 1.31 \\ 
  MC+ & 0.20 & 0.19 & 0.28 & 0.32 & 0.37 & 0.49 & 0.15 & 0.55 & 0.57 & 1.15 & 0.72 & 1.35 \\ 
 \hline
 & \multicolumn{12}{c}{Scenario~\ref{sce1b}} \\
\hline
  Lasso & 0.18 & 1.03 & 1.05 & 0.57 & 0.58 & 0.81 & 1.97 & 8.00 & 8.23 & 1.85 & 2.11 & 2.81 \\ 
  Enet & 0.27 & 1.32 & 1.34 & 0.81 & 0.73 & 1.09 & 1.87 & 7.73 & 7.96 & 1.94 & 2.02 & 2.80 \\ 
  AL & 0.18 & 0.73 & 0.76 & 0.28 & 0.43 & 0.51 & 1.26 & 6.23 & 6.36 & 1.68 & 1.70 & 2.39 \\ 
  Aenet & 0.21 & 0.72 & 0.75 & 0.36 & 0.44 & 0.57 & 1.05 & 5.45 & 5.55 & 1.76 & 1.49 & 2.31 \\ 
  SCAD & 0.26 & 0.99 & 1.02 & 0.39 & 0.63 & 0.74 & 1.20 & 5.55 & 5.68 & 1.71 & 1.59 & 2.34 \\ 
  MC+ & 0.26 & 0.99 & 1.03 & 0.40 & 0.64 & 0.76 & 1.21 & 5.53 & 5.66 & 1.71 & 1.59 & 2.33 \\ 
   \hline
& \multicolumn{12}{c}{Scenario~\ref{sce2a}} \\
\hline
   Lasso & 1.45 & 1.89 & 2.38 & 2.24 & 2.47 & 3.34 & 0.94 & 8.86 & 8.91 & 4.53 & 5.79 & 7.35 \\ 
  Enet & 1.54 & 1.89 & 2.44 & 2.38 & 2.62 & 3.54 & 1.27 & 6.54 & 6.66 & 4.95 & 4.85 & 6.93\\ 
  AL & 1.57 & 1.80 & 2.39 & 2.20 & 2.16 & 3.09 & 1.33 & 6.38 & 6.52 & 4.31 & 4.50 & 6.23 \\ 
  Aenet & 2.05 & 1.60 & 2.59 & 2.64 & 2.11 & 3.38 & 1.95 & 4.75 & 5.13 & 4.89 & 3.73 & 6.14  \\ 
  SCAD & 2.26 & 1.75 & 2.86  & 3.84 & 2.43 & 4.54 & 3.74 & 3.45 & 5.09 & 5.79 & 2.73 & 6.40 \\ 
   MC+ & 2.45 & 1.77 & 3.02& 3.95 & 2.39 & 4.61 & 3.81 & 3.41 & 5.12 & 5.82 & 2.71 & 6.42 \\ 
\hline
& \multicolumn{12}{c}{Scenario~\ref{sce2b}} \\
\hline
   Lasso & 3.28 & 2.87 & 4.36 & 3.36 & 3.20 & 4.64 & 3.85 & 13.41 & 13.95 & 4.61 & 11.20 & 12.11 \\ 
  Enet & 3.39 & 2.45 & 4.18 & 3.48 & 2.75 & 4.44 & 3.76 & 7.86 & 8.71 & 4.66 & 6.96 & 8.37 \\ 
  AL & 3.64 & 1.59 & 3.97 & 3.69 & 1.78 & 4.10 & 3.89 & 8.99 & 9.80 & 4.70 & 6.95 & 8.39 \\ 
  Aenet & 3.71 & 1.34 & 3.95 & 3.79 & 1.58 & 4.10 & 4.03 & 4.89 & 6.34 & 4.88 & 4.38 & 6.55 \\ 
  SCAD & 4.56 & 2.22 & 5.07 & 4.67 & 2.27 & 5.19 & 5.22 & 3.27 & 6.16 & 5.65 & 3.18 & 6.48 \\ 
  MC+ & 4.53 & 2.24 & 5.05 & 4.64 & 2.29 & 5.18 & 5.23 & 3.25 & 6.15 & 5.66 & 3.21 & 6.51 \\ 
   \hline

\end{tabular}}
\end{table}

To evaluate the selection properties of the estimates, we consider the true positive rate (TPR), the false positive rate (FPR), and the positive predictive value (PPV). We want to find the methods which have a TPR close to 100$\%$ meaning that it can select correctly all the true covariates, a FPR close to 0 showing that it can remove all the extra covariates from the model, and a PPV close to 100$\%$ indicating that, for Scenario~\ref{ch2:sce1} (resp. Scenario~\ref{ch2:sce2}), it can keep exactly the two (resp. five) true covariates and remove all the 48 (resp. 45) extra covariates. In general, for both regularized PL and regularized WPL, the best selection properties are obtained from larger $\kappa$ $(5 \times 10^{-4})$ which indicates weaker spatial dependence. To compare the regularization methods, we emphasize here that the main difference between regularization methods which satisfy (adaptive lasso, adaptive elastic net, SCAD, and MC+) and which cannot satisfy (lasso, elastic net) our theorems is that the methods which cannot satisfy our theorems tend to over-select covariates, leading to suffering from larger FPR and smaller PPV in general. Among all regularization methods considered in this study, adaptive lasso and adaptive elastic net seem to outperform the other methods in most cases. Although adaptive lasso and adaptive elastic net perform quite similarly, the adaptive lasso is slightly better.

In this simulation study, we are still able to show that even when the strong multicollinearity exists such as in Scenario~\ref{sce1a}, our proposed methods work well for the penalization methods satisfying our theorems. However, as probably expected, our methods are getting difficult to distinguish between the important and the noisy covariates as the setting becomes more and more complex. In the experiments we conduct, we find that the regularized PL and WPL (with adaptive lasso) perform quite similar for the easiest (Scenario~\ref{sce1a}) and the toughest (Scenario~\ref{sce2b}) setting. For Scenarios~\ref{sce1b} and \ref{sce2a}, the regularized WPL with adaptive lasso seems to be more favorable. From Table~\ref{ch2:table:sel}, we would recommend in general to combine the regularized WPL with the adaptive lasso to perform variable selection.

Table~\ref{ch2:table:pred} gives the prediction properties of the estimates (except for $\beta_0$ which is excluded) in terms of biases, standard deviations (SD), and square root of mean squared errors (RMSE), some criteria we define by
\begin{align*}
\mathrm{Bias}&=\left [ {\sum_{j=1}^{50} { \{\hat{\mathbb{E}}(\hat \beta_j)-\beta_{0j}\}^2}}  \right ]^\frac{1}{2},
\mathrm{SD}=\left [ {\sum_{j=1}^{50} { \hat \sigma_j^2}}  \right ]^\frac{1}{2},
\mathrm{RMSE}=\left [ {\sum_{j=1}^{50} { \hat{\mathbb{E}}(\hat \beta_j-\beta_{0j})^2}}  \right ]^\frac{1}{2},
\end{align*}
where $\hat{\mathbb{E}}(\hat \beta_j)$ and $ \hat \sigma_j^2$ are respectively the empirical mean and variance of the estimates $\hat \beta_j$, for $j=1,\ldots,50$.

In general, the properties improve with larger $\kappa$ due to weaker spatial dependence. Regarding the regularization methods considered in this study, adaptive lasso and adaptive elastic net perform best. Adaptive elastic net becomes more preferable than adaptive lasso for a clustered process ($\kappa= 5 \times 10^{-5}$) and for a structured spatial data (Scenarios~\ref{sce1b} and \ref{sce2b}). The  adaptive elastic net is more efficient than the adaptive lasso especially in the complex situation: large number of covariates, strong multicollinearity, clustered processes, and complex spatial structure due to the advantage of combining $l_1$ and $l_2$ penalties.

By employing regularized WPL, we have potentially more efficient estimates than that of the regularized PL, especially for the more clustered process. However, this does not mean that the regularized WPL is able to improve the RMSE since it usually introduces extra biases. Regularized WPL seems more appropriate for the case having covariates with complex spatial structure (Scenarios~\ref{sce1b} and \ref{sce2b}). Otherwise, regularized PL is more favorable. From Table~\ref{ch2:table:pred}, when the focus is on prediction, we would recommend to apply adaptive elastic net as a general advice, and we would combine with regularized WPL if the covariates have complex spatial structure (e.g. Scenarios~\ref{sce1b} and \ref{sce2b}) or combine with regularized PL if there is no evidence of complex spatial structure in the covariates (e.g. Scenarios~\ref{sce1a} and \ref{sce2a}).

Note that, from Table~\ref{ch2:table:sel}, the adaptive lasso is more preferable if the focus is on variable selection while, from Table~\ref{ch2:table:pred}, the adaptive elastic net is more favorable if the focus is for prediction. To have a more general recommendation, we would recommend applying adaptive elastic net when we are faced with a complex situation: a large number of covariates, strong multicollinearity, clustered processes and complex spatial structure. By combining $l_1$ and $l_2$ penalties, the adaptive elastic net provides a nice balance between selection and  prediction properties. This is why in most complex cases (Scenario~\ref{ch2:sce2} with $\kappa=5 \times 10^{-5}$), adaptive elastic net decides to choose more covariates than adaptive lasso (which includes true and noisy covariates) to suffer from slightly less appropriate properties for the selection performance but to be able to improve significantly the prediction properties.

\subsection{Application to forestry datasets}
\label{ch2:sec8}
We now consider the study of ecology in a tropical rainforest in Barro Corrolado Island (BCI), Panama, described previously in Section~\ref{ch2:intro}. In particular, we are interested  in studying the spatial distribution of 3,604 locations of {\it Beilschmiedia pendula Lauraceae} (BPL) trees by modeling its intensity as a log-linear function of 93 covariates consisting of 2 topological attributes, 13 soil properties, and 78 interactions between two soil nutrients.Regarding the relatively large number of covariates, we apply our proposed methods to select few covariates among them and estimate their coefficients. In particular, we use the regularized Poisson methods with the lasso, adaptive lasso, and adaptive elastic net. Note that we center and scale all the covariates to observe which covariates owing relatively large effect on the intensity.

\setlength{\tabcolsep}{2.5pt}
\renewcommand{\arraystretch}{1.5}
\begin{table}[!ht]
\caption{Number of selected and non-selected covariates among 93 covariates by regularized Poisson likelihood with lasso, adaptive lasso and adaptive elastic net regularization.}
\label{numb}
\centering
\begin{tabular}{l cccc}
  \hline
\hline
\multirow{2}{*}{Method}  & \multicolumn{2}{c}{Regularized PL} & \multicolumn{2}{c}{Regularized WPL}\\
\cline{2-5}
 & \#Selected & \#Non-selected & \#Selected & \#Non-selected \\ 
  \hline
\hline
LASSO & 77 & 16 & 20 & 73 \\ 
  AL & 50 & 43 & 9 & 84 \\ 
  AENET & 69 & 24 & 9 & 84 \\ 
   \hline
\end{tabular}
\end{table}

\setlength{\tabcolsep}{2.5pt}
\renewcommand{\arraystretch}{1.5}
\begin{table}[ht]
\caption{Nine common covariates selected}
\label{common}
\centering
\begin{tabular}{l | rrr | rrr}
\hline
\hline
\multirow{2}{*}{Covariates} & \multicolumn{3}{c}{Regularized PL} & \multicolumn{3}{c}{Regularized WPL} \\ 
\cline{2-7}
 & LASSO & AL & AENET & LASSO & AL & AENET \\ 
\hline
\hline
Elev & 0.33 & 0.37 & 0.34 & 0.23 & 0.14 & 0.14 \\ 
  Slope & 0.37 & 0.37 & 0.37 & 0.45 & 0.44 & 0.46 \\ 
  Cu & 0.45 & 0.30 & 0.30 & 0.16 & 0.22 & 0.19 \\ 
  Mn & 0.11 & 0.10 & 0.11 & 0.18 & 0.14 & 0.14 \\ 
  P & -0.49 & -0.45 & -0.48 & -0.50 & -0.43 & -0.39 \\ 
  Zn & -0.69 & -0.54 & -0.70 & -0.21 & -0.31 & -0.25 \\ 
  Al:P & -0.28 & -0.24 & -0.28 & -0.13 & -0.14 & -0.13 \\ 
  Mg:P & 0.49 & 0.26 & 0.30 & 0.38 & 0.38 & 0.34 \\ 
  N.Min:pH & 0.42 & 0.39 & 0.39 & 0.22 & 0.17 & 0.17 \\ 

   \hline
\end{tabular}
\end{table}

We present in Table~\ref{numb} the number of selected and non-selected covariates by each method. Out of 93 covariates, more than $50\%$ from the total number of covariates are selected by regularized PL while much fewer covariates are selected by regularized WPL. The regularized PL seems to overfit the model. 

Regarding lasso method, 77 covariates are selected by regularized PL method while 20 covariates are selected by regularized WPL. Compared to the two adaptive methods (i.e., adaptive lasso and adaptive elastic net), lasso tends to keep less important covariates. This may explain why lasso cannot satisfy our Theorem~\ref{ch2:the2}. In terms of selection properties, adaptive lasso and adaptive elastic net perform similarly when regularized WPL is applied.

Table~\ref{common} gives the information regarding nine covariates commonly selected among the six methods. Although the magnitudes of the estimates can be slightly different, the signs all agree with each other.

These results suggest that BPL trees favor to live in the areas of higher elevation and slope with a high concentration of Copper and Manganese in the soil. Furthermore, BPL trees prefer to live in the areas with lower concentration levels of Phosphorus and Zinc in the soil. The interaction between Aluminum and Phosphorus gives a negative association with the appearance of BPL trees while the interaction between Magnesium and Phosphorus and the interaction between Nitrogen mineralization and pH show a positive association with the occurrence of BPL trees. The maps of 3,604 locations of BPL trees, as well as the nine commonly selected covariates, are depicted in Figure~\ref{common var}.

\begin{figure}[!ht]

\renewcommand{\arraystretch}{0}
\setlength{\tabcolsep}{1pt}
 \begin{tabular}{l ccc}  
 \begin{tabular}{l}  
           \includegraphics[width=0.35\textwidth]{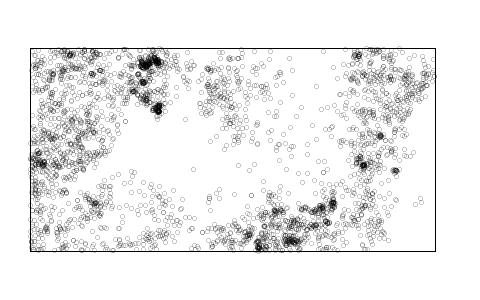}
           \end{tabular}

\begin{tabular}{l l l}
\includegraphics[width=0.21\textwidth]{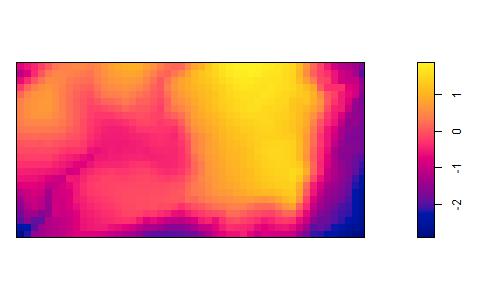} & \includegraphics[width=0.21\textwidth]{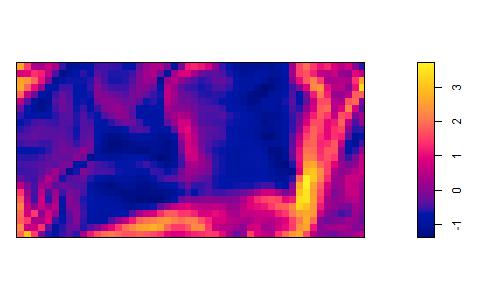} &  \includegraphics[width=0.21\textwidth]{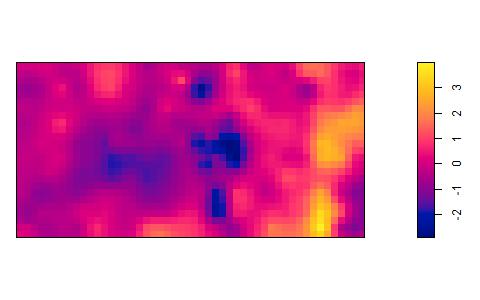} \\
\includegraphics[width=0.21\textwidth]{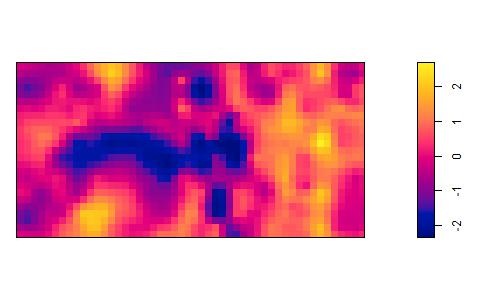} & \includegraphics[width=0.21\textwidth]{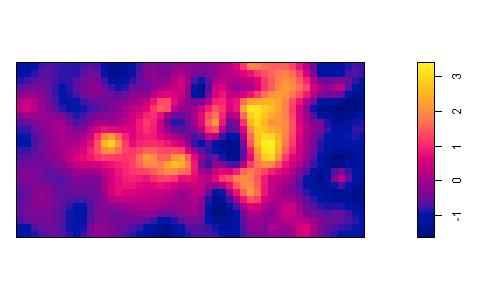} &  \includegraphics[width=0.21\textwidth]{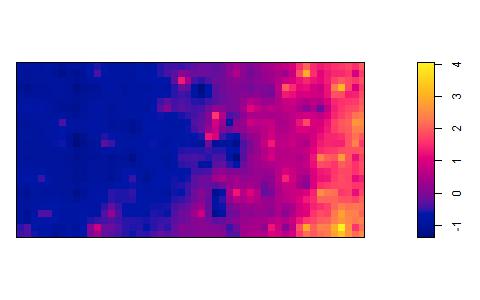} \\
\includegraphics[width=0.21\textwidth]{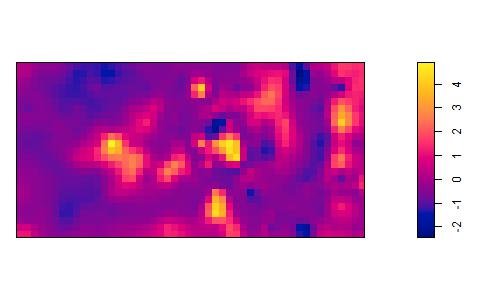} &  \includegraphics[width=0.21\textwidth]{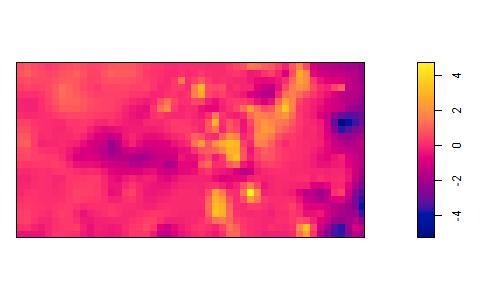} &  \includegraphics[width=0.21\textwidth]{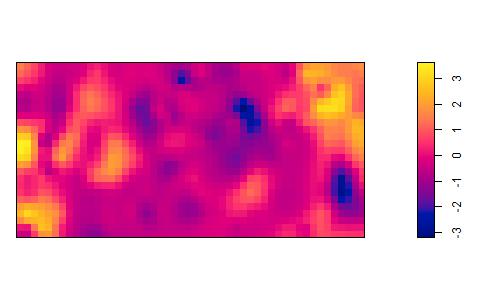} \\
\end{tabular}
\end{tabular}
\caption{Maps of 3,604 locations of BPL trees and the nine common selected covariates, from left to right, row~1:  elevation, slope and Copper, row~2: Manganese, Phosphorus and Zinc, row~3: the interaction between Aluminum and Phosphorus, between Magnesium and Phosphorus, and between Nitrogen mineralisation and pH.}
\label{common var}
\end{figure}



\section{Conclusion and discussion}
\label{conclu}
We consider feature selection techniques for spatial point processes intensity estimation by regularizing  estimating equations derived from Poisson and logistic regression likelihoods in a setting where the number of parameters diverges as the volume of observation domain increases. Under some conditions, we prove that the estimates obtained from such setting satisfy consistency, sparsity, and asymptotic normality. Our results are available for large classes of spatial point processes and for many penalty functions.

We conduct  simulation experiments to evaluate the finite sample properties of the regularized Poisson estimator and regularized weighted Poisson estimator. From the results, we would recommend in general the combination between regularized WPL and adaptive lasso if the concern is on variable selection. Furthermore, when the focus is on prediction, the regularized WPL combined with the adaptive elastic net is more preferable for the situation where there is a complex spatial structure in the covariates. For more general advice, we would recommend using the adaptive elastic net rather than the adaptive lasso since the adaptive elastic net is able to balance the selection and the prediction properties by combining the $l_1$ and the $l_2$ penalties.


To implement our methods, we combine the $\texttt{spatstat}$ $\texttt{R}$ package and the two $\texttt{R}$ packages $\texttt{glmnet}$ and $\texttt{ncvreg}$ dealing with penalized generalized linear models. This results in a computationally fast procedure even when the number of covariates is large. It is worth noticing that, as other regularization methods, our methods also rely on the selection of the tuning parameter. As the study in a classical regression analysis, the BIC-type methods are proposed to obtain selection consistent estimator \citep[see e.g.][]{zou2007degrees, wang2007tuning, wang2009shrinkage}. We have numerical evidence from simulation studies that this criterion can satisfy the selection consistency.  Theoretical justification in this spatial point process framework is the purpose of a future research.

We apply our methods to the Barro Corrolado Island study to estimate the intensity of {\it Beilschmiedia pendula Lauraceae} (BPL) tree as a log-linear function of 93 environmental covariates. Regularized weighted Poisson likelihood combined with adaptive elastic net performs similarly to adaptive lasso. Among 93 covariates, we find nine spatial covariates which may have a high influence to the appearance of BPL trees, including two topological attributes: elevation and slope, four soil nutrients: Copper, Manganese, Phosphorus and Zinc, and three interaction between two soil properties: the interaction between  Aluminum and Phosphorus, between Magnesium and Phosphorus, and between Nitrogen mineralisation and pH.

A further work would consider to include the 296 other species of trees, which were surveyed in the same observation region, to study the existence of any competition between BPL and other species of trees in the forest. In such a situation, the methods used in this study may face some computational issues. The Dantzig selector \citep{candes2007dantzig} might be a good alternative since the implementation for linear models (and generalized linear models) results in a linear programming. Thus, more competitive algorithms are available. It would be interesting to bring this approach to spatial point process framework.

\vskip 14pt
\noindent {\bf Acknowledgements}
\vskip 4pt
We thank A. L. Thurman who kindly shared the $\texttt{R}$ code used for the simulation study in \cite{thurman2015regularized} and P. Breheny who kindly provided his code used in $\texttt{ncvreg}$ $\texttt{R}$ package. We also thank R. Drouilhet for technical help. The research of A. Choiruddin is supported by The Danish Council for Independent Research -- Natural Sciences, grant DFF -- 7014-00074 "Statistics for point processes in space and beyond", and by the ”Centre for Stochastic Geometry and Advanced Bioimaging”, funded by grant 8721 from the Villum Foundation. The research of J.-F. Coeurjolly is supported by the Natural Sciences and Engineering Research Council of Canada. The research of F. Letu\'e is supported by ANR-11-LABX-0025 Persyval-lab (project Persyvact2).

The BCI soils data sets were collected and analyzed by J. Dalling, R. John, K. Harms, R. Stallard and J. Yavitt with support from NSF DEB021104,021115, 0212284,0212818 and OISE 0314581, and STRI Soils Initiative and CTFS and assistance from P. Segre and J. Trani. Datasets are available at the CTFS website \\ $\texttt{http://ctfs.si.edu/webatlas/datasets/bci/soilmaps/BCIsoil.html}$.
\par

\bibliographystyle{plainnat}
\bibliography{refSS}

\appendix


\section{Auxiliary Lemma}  \label{ch2:auxLemma}

The following lemma is used in the proof of Theorem~\ref{ch2:the1} and Lemma~\ref{ch2:sparsity} (which includes Lemma~\ref{ch2:clt} and Theorem~\ref{ch2:the2}). Throughout the proofs, the notation $\mathbf X_n = O_{\mathrm P} (x_n)$ or $\mathbf X_n = o_{\mathrm P} (x_n)$ for a random vector $\mathbf X_n$ and a sequence of real numbers $x_n$ means that $\|\mathbf X_n\|=O_{\mathrm P}(x_n)$ and $\|\mathbf X_n\|=o_{\mathrm P}(x_n)$. In the same way for a vector $\mathbf V_n$ or a squared matrix $\mathbf M_n$, the notation $\mathbf V_n=O(x_n)$ and  $\mathbf M_n=O(x_n)$ mean that $\|\mathbf V_n\|=O(x_n)$ and  $\|\mathbf M_n\|=O(x_n)$.

\begin{lemma} \label{ch2:bound}
Under conditions ($\mathcal C$.\ref{ch2:C:Dn})-($\mathcal C$.\ref{ch2:C:mixing}), the following result holds as \linebreak $n\to \infty$
	\begin{align}
 \ell^{(1)}_n(w;\boldsymbol \beta_0)  =O_\mathrm{P} \left( \sqrt { p_n |D_n| } \right)  \label{ch2:eq:ln}.	
\end{align}
\end{lemma}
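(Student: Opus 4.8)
The plan is to use the log-linear form~\eqref{ch2:intensity function}, under which $\log\rho(u;\boldsymbol\beta)=\mathbf{z}(u)^\top\boldsymbol\beta$ and the score vector reads
\begin{align*}
\ell^{(1)}_n(w;\boldsymbol\beta_0)=\sum_{u\in\mathbf X\cap D_n}w(u)\mathbf{z}(u)-\int_{D_n}w(u)\mathbf{z}(u)\rho(u;\boldsymbol\beta_0)\,\mathrm{d}u .
\end{align*}
Applying the first Campbell formula~\eqref{ch2: eq:campbell} coordinatewise (splitting each integrand into its positive and negative parts) shows that $\mathbb{E}\big(\ell^{(1)}_n(w;\boldsymbol\beta_0)\big)=\mathbf 0$. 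It therefore suffices to bound $\mathbb{E}\big(\|\ell^{(1)}_n(w;\boldsymbol\beta_0)\|^2\big)$, the trace of the covariance matrix of $\ell^{(1)}_n(w;\boldsymbol\beta_0)$: if it is $O(p_n|D_n|)$, then Chebyshev's inequality yields $\ell^{(1)}_n(w;\boldsymbol\beta_0)=O_{\mathrm P}\big(\sqrt{p_n|D_n|}\big)$, which is~\eqref{ch2:eq:ln}.

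First I would identify the variance. Writing $f(u)=w(u)\mathbf{z}(u)$ and applying the two Campbell formulas~\eqref{ch2: eq:campbell}--\eqref{ch2:eq:campbell2} (restricted to $D_n$) to $\sum_{u}f(u)f(u)^\top$ and $\sum^{\neq}_{u,v}f(u)f(v)^\top$, together with $\rho^{(2)}(u,v)=\rho(u)\rho(v)g(u,v)$, gives the classical decomposition
\begin{align*}
\mathrm{Var}\big(\ell^{(1)}_n(w;\boldsymbol\beta_0)\big)=\mathbf{B}_n(w;\boldsymbol\beta_0)+\mathbf{C}_n(w;\boldsymbol\beta_0),
\end{align*}
with $\mathbf{B}_n$ and $\mathbf{C}_n$ as in Section~\ref{ch2:sec:not}. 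Then $\mathrm{tr}\,\mathbf{B}_n=\int_{D_n}w(u)^2\|\mathbf{z}(u)\|^2\rho(u;\boldsymbol\beta_0)\,\mathrm{d}u$, and since condition~($\mathcal C$.\ref{ch2:C:cov}) gives $\|\mathbf{z}(u)\|^2\le p_n\big(\sup_{i,u}|z_i(u)|\big)^2=O(p_n)$ while $w$ and $\rho(\cdot;\boldsymbol\beta_0)$ are bounded (conditions ($\mathcal C$.\ref{ch2:C:Theta})--($\mathcal C$.\ref{ch2:C:cov})), we get $\mathrm{tr}\,\mathbf{B}_n=O(p_n|D_n|)$ directly.

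The remaining and main step is $\mathrm{tr}\,\mathbf{C}_n=\int_{D_n}\int_{D_n}w(u)w(v)\,\mathbf{z}(u)^\top\mathbf{z}(v)\,\{\rho^{(2)}(u,v)-\rho(u;\boldsymbol\beta_0)\rho(v;\boldsymbol\beta_0)\}\,\mathrm{d}u\,\mathrm{d}v$, using $\{g(u,v)-1\}\rho(u)\rho(v)=\rho^{(2)}(u,v)-\rho(u)\rho(v)$. By Cauchy--Schwarz $|\mathbf{z}(u)^\top\mathbf{z}(v)|\le\|\mathbf{z}(u)\|\,\|\mathbf{z}(v)\|=O(p_n)$ uniformly, so it is enough to show $\sup_{u}\int_{D_n}|\rho^{(2)}(u,v)-\rho(u;\boldsymbol\beta_0)\rho(v;\boldsymbol\beta_0)|\,\mathrm{d}v=O(1)$, which then gives $\mathrm{tr}\,\mathbf{C}_n=O(p_n|D_n|)$. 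I would split this inner integral over $\{|u-v|\le 1\}$ and $\{|u-v|>1\}$: on the bounded region $|\rho^{(2)}(u,v)-\rho(u)\rho(v)|$ is bounded by a constant, using the boundedness of $\rho^{(2)}$ (condition ($\mathcal C$.\ref{ch2:C:rhok})) and of $\rho(\cdot;\boldsymbol\beta_0)$ (condition ($\mathcal C$.\ref{ch2:C:cov})), so its contribution is $O(1)$; on the unbounded region a covariance inequality for strong mixing point processes applied to infinitesimal neighbourhoods of $u$ and $v$ — which is where the higher-order product densities of condition~($\mathcal C$.\ref{ch2:C:rhok}) and the mixing rate of condition~($\mathcal C$.\ref{ch2:C:mixing}) are used — yields $|\rho^{(2)}(u,v)-\rho(u)\rho(v)|\le C\,\alpha_{2,\infty}(|u-v|)=O(|u-v|^{-\tilde t})$ with $\tilde t>d(2+t)/t>d$, whence $\int_{|u-v|>1}|u-v|^{-\tilde t}\,\mathrm{d}v<\infty$. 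Combining the two trace bounds gives $\mathbb{E}\big(\|\ell^{(1)}_n(w;\boldsymbol\beta_0)\|^2\big)=O(p_n|D_n|)$ and~\eqref{ch2:eq:ln} follows.

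The step I expect to be the main obstacle is the control of $\rho^{(2)}(u,v)-\rho(u)\rho(v)$ away from the diagonal: one must make the covariance inequality for mixing point processes precise and verify that conditions ($\mathcal C$.\ref{ch2:C:rhok})--($\mathcal C$.\ref{ch2:C:mixing}), together with the exponent $\tilde t>d(2+t)/t$, render the resulting bound integrable in $v$ uniformly in $u$ and $n$. This is essentially the computation already carried out in the fixed-$p$ setting of \citet{choiruddin2017convex}; the only genuinely new feature here is the extra factor $p_n$ carried through from the bounds $\|\mathbf{z}(u)\|^2=O(p_n)$ and $|\mathbf{z}(u)^\top\mathbf{z}(v)|=O(p_n)$, which is precisely what turns the fixed-$p$ rate $\sqrt{|D_n|}$ into $\sqrt{p_n|D_n|}$.
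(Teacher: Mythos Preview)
Your proposal is correct and follows essentially the same route as the paper: identify $\mathrm{Var}\big(\ell^{(1)}_n(w;\boldsymbol\beta_0)\big)=\mathbf{B}_n+\mathbf{C}_n$ via the Campbell formulas, bound it by $O(p_n|D_n|)$ using the boundedness conditions together with the integrability of $g(u,v)-1$ implied by ($\mathcal C$.\ref{ch2:C:rhok})--($\mathcal C$.\ref{ch2:C:mixing}), and conclude by Chebyshev. The only difference is that the paper asserts $\sup_u\int\{g(u,v)-1\}\,\mathrm{d}v<\infty$ directly from ($\mathcal C$.\ref{ch2:C:rhok})--($\mathcal C$.\ref{ch2:C:mixing}) without detail, whereas you spell out the near/far-diagonal split and the mixing covariance inequality; your explicit use of the trace $\mathrm{tr}(\mathbf{B}_n+\mathbf{C}_n)=\mathbb{E}\|\ell^{(1)}_n\|^2$ is also slightly cleaner than the paper's matrix-norm phrasing for passing to the $O_{\mathrm P}$ conclusion.
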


\begin{proof}
Using Campbell Theorems~\eqref{ch2: eq:campbell}-\eqref{ch2:eq:campbell2}, the score vector $ \ell^{(1)}_n(w;\boldsymbol \beta_0)  $ has variance
\begin{align*}
\mathrm{Var}[ \ell^{(1)}_{n}(w; \boldsymbol \beta_0)]= \mathbf{B}_{n}(w; \boldsymbol \beta_{0})+\mathbf{C}_{n}(w; \boldsymbol \beta_{0}).
\end{align*}
Conditions ($\mathcal C$.\ref{ch2:C:rhok})-($\mathcal C$.\ref{ch2:C:mixing}) allow us to obtain that $\sup_{u \in \mathbb{R}^d} \int_{\mathbb{R}^d} \{g(u,v)-1\} \mathrm{d}v < \nolinebreak \infty$. We then deduce using conditions ($\mathcal C$.\ref{ch2:C:Dn})-($\mathcal C$.\ref{ch2:C:cov}) that
\begin{align*}
\mathbf{B}_{n}(w; \boldsymbol \beta_{0})+\mathbf{C}_{n}(w; \boldsymbol \beta_{0})=O(p_n |D_n|).
\end{align*}
The result is proved since for any centered real-valued stochastic process ${Y_n}$ with finite variance $\mathrm{Var}[{Y_n}]$, ${Y_n}=O_\mathrm{P}(\sqrt{\mathrm{Var}[{Y_n}]})$.
\end{proof}


\section{Proof of Theorem~\ref{ch2:the1}} \label{ch2:proof1}

In the proof of this result and the following ones, the notation $\kappa$ stands for a generic constant which may vary from line to line. In particular this constant is independent of $n$, $\boldsymbol \beta_0$ and $\mathbf k$.

\begin{proof}
Let $ d_n = \sqrt p_n(|D_n|^{-1/2}+a_n)$, and $\mathbf{k}=\{k_1, k_2, \ldots, k_{p_n}\}^\top $. We remind the reader that the estimate of $\boldsymbol\beta_0$ is defined as the maximum of the function $Q_n$ (given by~\eqref{Qn}) over $\Theta$, an open convex bounded set of $\mathbb R^{p_n}$ for any $n \geq 1$. For any $\mathbf k$ such that $\|\mathbf k\|\leq K<\infty$, $\boldsymbol \beta_0 + d_n \mathbf k \in \Theta$ for $n$  sufficiently large. Assume this is valid in the following.
To prove Theorem~\ref{ch2:the1}, we aim at proving that for any given $\epsilon>0$, there exists sufficiently large $K>0$ such that for $n$ sufficiently large 
\begin{equation}
\label{ch2:eq:15}
\mathrm{P}\bigg(\sup_{\|\mathbf{k}\| = K} \Delta_n(\mathbf k)>0\bigg)\leq \epsilon,
\quad \mbox{ where } \Delta_n(\mathbf k) = Q_n(w;\boldsymbol \beta_0+d_n\mathbf{k})-Q_n(w;\boldsymbol \beta_0).
\end{equation}
Equation~\eqref{ch2:eq:15} will imply that with probability at least $1-\epsilon$, there exists a local maximum in the ball $\{\boldsymbol \beta_0+d_n\mathbf{k}:\|\mathbf{k}\| \leq K\}$, and therefore  a local maximizer $\boldsymbol{\hat{\beta}}$ is such that $\|{ \boldsymbol {\hat \beta}-\boldsymbol \beta_0}\|=O_\mathrm{P}(d_n)$.  We decompose $\Delta_n(\mathbf k)$ as $\Delta_n(\mathbf k)= T_1+T_2$ where
\begin{align*}
	T_1 & = \ell_n(w;\boldsymbol \beta_0+d_n\mathbf{k})-\ell_n(w; \boldsymbol \beta_0) \\
	T_2 & = |D_n|{\sum_{j=1}^{p_n} \big( p_{\lambda_{n,j}}(|\beta_{0j}|)}- p_{\lambda_{n,j}}(|\beta_{0j}+d_nk_j|)\big).
\end{align*}
Since $\rho(u;\cdot)$ is infinitely continuously differentiable and $\ell_n^{(2)}(w;\boldsymbol\beta) =-\mathbf A_n(w;\boldsymbol\beta)$, then using a second-order Taylor expansion there exists $t\in (0,1)$ such that
\begin{align*}
	T_1 =& \, d_n \mathbf k^\top \ell_n^{(1)}(w;\boldsymbol \beta_0) - \frac12d_n^2\mathbf k^\top \mathbf{A}_n(w;\boldsymbol \beta_0) \mathbf k  \\
	&+ \frac12d_n^2\mathbf k^\top \left( \mathbf{A}_n(w;\boldsymbol \beta_0) -\mathbf{A}_n(w;\boldsymbol \beta_0 + td_n \mathbf k) \right) \mathbf k .	
\end{align*}
By conditions~($\mathcal C$.\ref{ch2:C:Theta})-($\mathcal C$.\ref{ch2:C:cov}), there exists a non-negative constant $\kappa$ such that
\[
	\frac12\|\mathbf{A}_n(w;\boldsymbol \beta_0) -\mathbf{A}_n(w;\boldsymbol \beta_0 + td_n \mathbf k) \| \leq  \kappa d_n |D_n| p_n.
\]
Now, denote $\check \nu := \liminf_{n\to \infty} \nu_{\min}(|D_n|^{-1}\mathbf{A}_n(w;\boldsymbol \beta_0))$. By condition ($\mathcal C$.\ref{ch2:C:An}), we have  that for any $\mathbf k$
\[
	 0<\check \nu \leq  \frac{\mathbf k^\top \left( |D_n|^{-1}\mathbf A_n(w;\boldsymbol\beta_0)\right) \mathbf k}{\|\mathbf k\|^2}.
\]
Therefore, we have
\[
	T_1 \leq d_n \|\ell_n^{(1)}(w;\boldsymbol \beta_0)\| \, \| \mathbf k \|  - \frac{\check \nu}2 d_n^2 |D_n| \|\mathbf k\|^2 +\kappa p_n d_n^3  |D_n|\|\mathbf k\|^2.
\]
Now by the condition ($\mathcal C$.\ref{ch2:C9}) and by assumption that $a_n=O(|D_n|^{-1/2})$, we obtain $p_n d_n  = o(1)$, so $\kappa p_n d_n^3  |D_n|\|\mathbf k\|^2 = o(1) d_n^2 |D_n|\|\mathbf k\|^2$. Hence, for $n$ sufficiently large
\[
	T_1 \leq d_n \|\ell_n^{(1)}(w;\boldsymbol \beta_0)\| \, \| \mathbf k \|  - \frac{\check \nu}4 d_n^2 |D_n| \|\mathbf k\|^2.
\]

Regarding the term $T_2$,
\[
T_2\leq T_2^\prime := |D_n|{\sum_{j=1}^{s} \big( p_{\lambda_{n,j}}(|\beta_{0j}|)}- p_{\lambda_{n,j}}(|\beta_{0j}+d_nk_j|)\big) 	
\]
since for any $j$ the penalty function $p_{\lambda_{n,j}}$ is non-negative and $p_{\lambda_{n,j}}(|\beta_{0j}|)=0$ for $j=s+1,\dots,p_n$.

From ($\mathcal C$.\ref{ch2:C:plambda}), for $n$ sufficiently large, $p_{\lambda_{n,j}}$ is twice continuously differentiable for every $\beta_j = \beta_{0j}+t d_n k_j$ with $t\in (0,1)$. Therefore using a third-order Taylor expansion, there exist $t_j \in (0,1)$, $j=1,\dots,s$ such that $-T_2^\prime= T_{2,1}^\prime + T_{2,2}^\prime + T_{2,3}^\prime$, where
\begin{align*}
T_{2,1}^\prime &=d_n|D_n|\sum_{j=1}^{s} k_j p_{\lambda_{n,j}}^\prime(|\beta_{0j}|) \sign(\beta_{0,j})  \leq  \sqrt s a_n d_n |D_n| \, \|\mathbf k\| \leq  d_n^2 |D_n| \, \| \mathbf k\|, \\
T_{2,2}^\prime &= \frac12 d_n^2 |D_n|\sum_{j=1}^{s} k_j^2 p^{\prime\prime}_{\lambda_{n,j}}  (|\beta_{0j}|) \leq  c_n d_n^2|D_n|  \|\mathbf k\|^2, \\
T_{2,3}^\prime &= \frac16 d_n^3|D_n| \sum_{j=1}^{s} k_j^3 p^{\prime\prime\prime}_{\lambda_{n,j}}  (|\beta_{0j}+t_j d_n k_j|) \leq \kappa d_n^3 |D_n|.
\end{align*}
The three inequalities above are obtained using the definitions of $a_n$ and $c_n$, condition~($\mathcal C$.\ref{ch2:C:plambda}) and Cauchy-Schwarz inequality. We deduce that for $n$ sufficiently large
\begin{align*}
	T_2 \leq | T_2^\prime| &\leq 2 d_n^2 |D_n| \|\mathbf k\|,
\end{align*}
and then
\[
	\Delta_n(\mathbf k) \leq d_n \|\ell_n^{(1)}(w;\boldsymbol \beta_0)\| \, \| \mathbf k \| -\frac{\check \nu}4 d_n^2 |D_n| \|\mathbf k\|^2 + 2 d_n^2 |D_n| \|\mathbf k\|.
\]
We now return to (\ref{ch2:eq:15}): for $n$ sufficiently large
\[
\mathrm{P}\bigg({\sup_{\|\mathbf{k}\|= K}  \Delta_n(\mathbf{k})>0}\bigg) \leq 	\mathrm P \bigg(
\| \ell_n^{(1)}(w;\boldsymbol\beta_0)\| > \frac{\check \nu}4 d_n|D_n| K - 2 d_n |D_n|
 \bigg).
\]
Since $d_n |D_n|=O(\sqrt{p_n |D_n|})$, by choosing $K$ large enough, there exists $\kappa$ such that for $n$  sufficiently large
\[
	\mathrm P \bigg( \sup_{\|\mathbf k\|=K}  \Delta_n(\mathbf k) >0\bigg) \leq \mathrm P \bigg( \|\ell_n^{(1)}(w;\boldsymbol \beta_0)\| >\kappa \sqrt{p_n |D_n|}\bigg) \leq \epsilon
\]
for any given $\epsilon>0$ from \eqref{ch2:eq:ln} in Lemma~\ref{ch2:bound}.

\end{proof}


\section{Proof of Theorem~\ref{ch2:the2}} \label{ch2:proof2}
Before proving Theorem~\ref{ch2:the2}, we present Lemmas~\ref{ch2:sparsity}-\ref{ch2:clt}. Lemma~\ref{ch2:sparsity} is used to prove Theorem~\ref{ch2:the2}(i) while  Lemma~\ref{ch2:clt} is used to derive Theorem~\ref{ch2:the2}(ii).
\begin{lemma}
\label{ch2:sparsity}
Assume the conditions ($\mathcal C$.\ref{ch2:C:Dn})-($\mathcal C$.\ref{ch2:C:plambda}) hold. If $a_n=O(|D_n|^{-1/2})$ and $b_n \sqrt{|D_n|/p_n^2} \to \infty$ as $n\to\infty$, then with probability tending to $1$, for any {$\boldsymbol \beta_1$} satisfying $\|{\boldsymbol \beta_1 - \boldsymbol \beta_{01}}\|=O_\mathrm{P}(\sqrt{p_n/|D_n|})$, and for any constant $K_1 > 0$,
\begin{align*}
Q_n\Big(w;({\boldsymbol \beta_1}^\top,\mathbf{0}^\top)^\top \Big)
= \max_{\| \boldsymbol \beta_2\| \leq K_1 \sqrt{p_n/|D_n|}}
Q_n\Big(w;({\boldsymbol \beta_1}^\top,{\boldsymbol \beta_2}^\top)^\top \Big).
\end{align*}
\end{lemma}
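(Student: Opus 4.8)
\emph{Approach.} The plan is to follow the argument of \citet{fan2001variable} and \citet{fan2004nonconcave}: I would show that, with probability tending to one, for every $j\in\{s+1,\dots,p_n\}$ and every $\boldsymbol\beta=(\boldsymbol\beta_1^\top,\boldsymbol\beta_2^\top)^\top$ with $\|\boldsymbol\beta_1-\boldsymbol\beta_{01}\|=O_{\mathrm P}(\sqrt{p_n/|D_n|})$, $\|\boldsymbol\beta_2\|\le\epsilon_n:=K_1\sqrt{p_n/|D_n|}$ and $\beta_j\neq0$, the partial derivative $\partial Q_n(w;\boldsymbol\beta)/\partial\beta_j$ has sign $-\sign(\beta_j)$. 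Granting this, fix such a $\boldsymbol\beta_1$ and an arbitrary $\boldsymbol\beta_2^\ast$ with $\|\boldsymbol\beta_2^\ast\|\le\epsilon_n$; by condition~($\mathcal C$.\ref{ch2:C:plambda}) the map $s\mapsto Q_n(w;(\boldsymbol\beta_1^\top,s\boldsymbol\beta_2^{\ast\top})^\top)$ is continuous on $[0,1]$ and differentiable on $(0,1)$ (the relevant points lying in $\Theta$ for $n$ large) with derivative $\sum_{j=s+1}^{p_n}\beta^\ast_{2,j}\,\partial Q_n/\partial\beta_j$ evaluated at $(\boldsymbol\beta_1^\top,s\boldsymbol\beta_2^{\ast\top})^\top$; for $s\in(0,1)$ each summand with $\beta^\ast_{2,j}\neq0$ is negative (the $j$-th coordinate there is $s\beta^\ast_{2,j}$, of modulus $\le\epsilon_n$, so the sign property applies) and the others vanish, so the map is strictly decreasing and $Q_n(w;(\boldsymbol\beta_1^\top,\mathbf 0^\top)^\top)\ge Q_n(w;(\boldsymbol\beta_1^\top,\boldsymbol\beta_2^{\ast\top})^\top)$; taking the supremum over $\boldsymbol\beta_2^\ast$ finishes the proof.

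The heart of the argument is to bound the unpenalized score. Writing $\partial Q_n/\partial\beta_j=\partial\ell_n/\partial\beta_j-|D_n|\,p'_{\lambda_{n,j}}(|\beta_j|)\sign(\beta_j)$ for $j>s$, I would Taylor-expand $\partial\ell_n(w;\cdot)/\partial\beta_j$ about $\boldsymbol\beta_0$, using that $\rho(u;\cdot)$ is $C^\infty$, that $\ell_n^{(2)}(w;\boldsymbol\beta)=-\mathbf A_n(w;\boldsymbol\beta)$, and that the log-linear form makes every entry of $\ell_n^{(2)}$ and of each third-derivative matrix $\ell_n^{(3),j}$ an integral $\int_{D_n}w\,z_jz_kz_\ell\,\rho$, hence $O(|D_n|)$ uniformly by ($\mathcal C$.\ref{ch2:C:Theta})-($\mathcal C$.\ref{ch2:C:cov}). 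This yields
\[
\Big|\tfrac{\partial\ell_n}{\partial\beta_j}(w;\boldsymbol\beta)\Big|\le\big\|\ell_n^{(1)}(w;\boldsymbol\beta_0)\big\|+\big\|\big(\mathbf A_n(w;\boldsymbol\beta_0)\big)_{j\cdot}\big\|\,\|\boldsymbol\beta-\boldsymbol\beta_0\|+\tfrac12\big\|\ell_n^{(3),j}(w;\tilde{\boldsymbol\beta})\big\|\,\|\boldsymbol\beta-\boldsymbol\beta_0\|^2,
\]
where, on the admissible region, $\|\ell_n^{(1)}(w;\boldsymbol\beta_0)\|=O_{\mathrm P}(\sqrt{p_n|D_n|})$ by Lemma~\ref{ch2:bound}, the $j$-th row norm of $\mathbf A_n$ is $O(\sqrt{p_n}\,|D_n|)$, $\|\ell_n^{(3),j}\|=O(p_n|D_n|)$, and $\|\boldsymbol\beta-\boldsymbol\beta_0\|^2=\|\boldsymbol\beta_1-\boldsymbol\beta_{01}\|^2+\|\boldsymbol\beta_2\|^2=O_{\mathrm P}(p_n/|D_n|)$. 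Hence the three terms are $O_{\mathrm P}(\sqrt{p_n|D_n|})$, $O_{\mathrm P}(p_n\sqrt{|D_n|})$ and $O_{\mathrm P}(p_n^2)$; since ($\mathcal C$.\ref{ch2:C9}) gives $p_n^2=o(p_n\sqrt{|D_n|})$, we obtain $\partial\ell_n(w;\boldsymbol\beta)/\partial\beta_j=O_{\mathrm P}(p_n\sqrt{|D_n|})$, uniformly in $j>s$ and in $\boldsymbol\beta$ over the region (no union bound is needed because every estimate is expressed through $\|\boldsymbol\beta-\boldsymbol\beta_0\|$ and the single random variable $\|\ell_n^{(1)}(w;\boldsymbol\beta_0)\|$, and $p_n/|D_n|\to0$ keeps the segment inside $\Xi(\boldsymbol\beta_0)$).

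Finally I would compare with the penalty: by~\eqref{ch2:eq:bn}, for $j>s$ and $0<|\beta_j|\le\epsilon_n$ one has $p'_{\lambda_{n,j}}(|\beta_j|)\ge b_n$, so $|D_n|\,p'_{\lambda_{n,j}}(|\beta_j|)\ge|D_n|b_n$, and the hypothesis $b_n\sqrt{|D_n|/p_n^2}\to\infty$ is precisely $|D_n|b_n/(p_n\sqrt{|D_n|})\to\infty$; hence, with probability tending to one, $|\partial\ell_n(w;\boldsymbol\beta)/\partial\beta_j|<|D_n|\,p'_{\lambda_{n,j}}(|\beta_j|)$ uniformly, so $\partial Q_n(w;\boldsymbol\beta)/\partial\beta_j$ has the sign of $-|D_n|\,p'_{\lambda_{n,j}}(|\beta_j|)\sign(\beta_j)=-\sign(\beta_j)$, which is the sign property used above. (The assumption $a_n=O(|D_n|^{-1/2})$, together with ($\mathcal C$.\ref{ch2:C9}), is what makes the rate $\sqrt{p_n/|D_n|}$ in the statement the genuine rate of the root-$(|D_n|/p_n)$ consistent maximizer of Theorem~\ref{ch2:the1} and keeps $\|\boldsymbol\beta-\boldsymbol\beta_0\|\to0$ so that the expansions are licit.) The main obstacle is the dimension bookkeeping in the score expansion — in particular controlling the third-order remainder $O_{\mathrm P}(p_n^2)$ and verifying it is negligible against $p_n\sqrt{|D_n|}$ under ($\mathcal C$.\ref{ch2:C9}), and phrasing every bound through $\|\boldsymbol\beta-\boldsymbol\beta_0\|$ so that uniformity over the $p_n$ coordinates and over the ball comes for free rather than through a union bound.
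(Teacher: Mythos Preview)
Your proposal is correct and follows essentially the same route as the paper: establish the sign property $\sign(\partial Q_n/\partial\beta_j)=-\sign(\beta_j)$ for $j>s$ by bounding $\partial\ell_n(w;\boldsymbol\beta)/\partial\beta_j=O_{\mathrm P}(p_n\sqrt{|D_n|})$ and then using $b_n\sqrt{|D_n|/p_n^2}\to\infty$ so that the penalty term $|D_n|b_n$ dominates. The only cosmetic difference is that the paper reaches the $O_{\mathrm P}(p_n\sqrt{|D_n|})$ bound via a one–step mean–value expansion of $\rho(u;\cdot)$ (writing $\partial\ell_n/\partial\beta_j$ at $\boldsymbol\beta$ as the score at $\boldsymbol\beta_0$ plus $R_n=\int_{D_n}w\,z_j(\rho(\cdot;\boldsymbol\beta)-\rho(\cdot;\boldsymbol\beta_0))$ and bounding $|R_n|\le\kappa\|\boldsymbol\beta-\boldsymbol\beta_0\|\int_{D_n}\|\mathbf z\|$), whereas you expand the score to second order and carry the third–derivative remainder $O_{\mathrm P}(p_n^2)$, which you then check is negligible under~($\mathcal C$.\ref{ch2:C9}); both arguments land on the same rate and the same comparison with $|D_n|b_n$.
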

\begin{proof}
Let $\varepsilon_n= K_1 \sqrt{p_n/|D_n|}$. It is sufficient to show that with probability tending to $1$ as ${n\to \infty}$, for any ${\boldsymbol \beta_1}$ satisfying $\|{\boldsymbol \beta_1 -\boldsymbol \beta_{01}}\|=O_\mathrm{P}(\sqrt{p_n/|D_n|})$, we have for any $j=s+1, \ldots, p_n$

\begin{equation}
\label{ch2:sparsitya}
\frac {\partial Q_n(w;\boldsymbol \beta)}{\partial\beta_j}<0 \quad
\mbox { for } 0<\beta_j<\varepsilon_n, \mbox{ and}
\end{equation}

\begin{equation}
\label{ch2:sparsityb}
\frac {\partial Q_n(w;\bf \boldsymbol \beta)}{\partial\beta_j}>0 \quad
\mbox { for } -\varepsilon_n<\beta_j<0.
\end{equation}
From (\ref{ln}),
\begin{align*}
\frac {\partial \ell_n(w;\boldsymbol \beta)}{\partial\beta_j} = \frac {\partial \ell_n{(w;\boldsymbol \beta_0)}}{\partial\beta_j} + R_n,
\end{align*}
where $R_n = \int_{D_n} w(u) z_j(u)\big(\rho(u;\boldsymbol \beta)-\rho(u;\boldsymbol \beta_0)\big) \mathrm{d}u$. Using similar arguments used in the proof of Lemma~\ref{ch2:bound}, we can prove that
\begin{align*}
\frac {\partial \ell_n{(w;\boldsymbol \beta_0)}}{\partial\beta_j} = O_\mathrm{P}(\sqrt{|D_n|}).
\end{align*}
Let $u \in \mathbb{R}^d$. By Taylor expansion, there exists $ t\in (0,1), $ such that 
\begin{align*}
\rho(u;\boldsymbol \beta) = \rho(u;\boldsymbol \beta_0) + (\boldsymbol \beta-\boldsymbol \beta_0)^\top \mathbf{z}(u) \rho(u;\boldsymbol \beta_0 + t(\boldsymbol \beta-\boldsymbol \beta_0 )).
\end{align*}
For $n$ sufficiently large, $\boldsymbol \beta_0 + t(\boldsymbol \beta-\boldsymbol \beta_0 ) \in \Xi(\boldsymbol \beta_0)$ defined in condition ($\mathcal C$.\ref{ch2:C:Theta}). Therefore, for $n$ sufficiently large, we have by Cauchy-Schwarz inequality and conditions ($\mathcal C$.\ref{ch2:C:Theta})-($\mathcal C$.\ref{ch2:C:cov})
\begin{align*}
|R_n| \leq \kappa \int_{D_n} \| \boldsymbol \beta-\boldsymbol \beta_0 \| \| \mathbf{z}(u)\| \mathrm{d}u = O_\mathrm{P}(\sqrt{|D_n| p_n^2}).
\end{align*}
We therefore deduce that for any $j=s+1, \dots, p_n$
\begin{align}
\frac {\partial \ell_n(w;\boldsymbol \beta)}{\partial\beta_j} = O_\mathrm{P}(\sqrt{|D_n| p_n ^2} ) \label{ch2:Op}.
\end{align}

Now, we want to prove (\ref{ch2:sparsitya}). Let $0<\beta_j<\varepsilon_n$ and $b_n$ be the sequence given by~(\ref{ch2:eq:bn}). By condition ($\mathcal C$.\ref{ch2:C:plambda}), $b_n$ is well-defined and since by the assumption $b_n \sqrt{|D_n|/p_n^2} \to \infty$, in particular,  $b_n>0$ for $n$ sufficiently large. Therefore, for $n$ sufficiently large,
\begin{align*}
\mathrm{P} \left ( \frac {\partial Q_n(w;\boldsymbol \beta)}{\partial\beta_j}<0 \right)&=\mathrm{P} \left ( \frac {\partial \ell_n(w;\boldsymbol \beta)}{\partial\beta_j} - |D_n|p'_{\lambda_{n,j}}(|\beta_j|)\sign(\beta_j)<0 \right)\\
&=\mathrm{P} \left ( \frac {\partial \ell_n(w;\boldsymbol \beta)}{\partial\beta_j}< |D_n|p'_{\lambda_{n,j}}(|\beta_j|) \right)\\
& \geq \mathrm{P} \left ( \frac {\partial \ell_n(w;\boldsymbol \beta)}{\partial\beta_j}< |D_n|b_n \right)\\
&= \mathrm{P} \left ( \frac {\partial \ell_n(w;\boldsymbol \beta)}{\partial\beta_j}< \sqrt{|D_n|p_n^2} \; \sqrt{\frac{|D_n|}{p_n^2}}b_n \right).
\end{align*}
The assertion (\ref{ch2:sparsitya}) is therefore deduced from (\ref{ch2:Op}) and from the assumption that $b_n \sqrt{|D_n|/p_n^2} \to \infty$ as $n  \to \infty$. We proceed similarly to prove (\ref{ch2:sparsityb}).
\end{proof}


\begin{lemma} \label{ch2:clt}
Under the conditions ($\mathcal C$.\ref{ch2:C:Dn})-($\mathcal C$.\ref{ch2:C:plambda}) and the conditions required in \linebreak Lemma~\ref{ch2:sparsity}, the following convergence holds in distribution as $n\to \infty$
	\begin{align}
\{ \mathbf{B}_{n,11}(w; \boldsymbol \beta_{01})+\mathbf{C}_{n,11}(w; \boldsymbol \beta_{01})\}^{-1/2}\ell^{(1)}_{n,1}(w; \boldsymbol \beta_{01}) \xrightarrow{d} \mathcal{N}(\mathbf{0},\mathbf{I}_{s}) \label{eq:normal},
\end{align}
where $\ell^{(1)}_{n,1}(w;\boldsymbol \beta_{0})$ is the first $s$ components of $\ell^{(1)}_{n}(w;\boldsymbol \beta_{0})$ and $\mathbf{B}_{n,11}(w;\boldsymbol{\beta}_{0})$ \linebreak $($resp. $\mathbf{C}_{n,11}(w;\boldsymbol{\beta}_{0}))$ is the $s \times s$ top-left corner of $\mathbf{B}_{n}(w;\boldsymbol{\beta}_{0})$ $($resp  $\mathbf{C}_{n}(w;\boldsymbol{\beta}_{0}))$.
\end{lemma}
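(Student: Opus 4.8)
The plan is to exploit that the number $s$ of nonzero coefficients and the associated covariates $\mathbf z_{01}$ (and the weight $w$) do not depend on $n$, so that $\ell^{(1)}_{n,1}(w;\boldsymbol\beta_{01})$ is, in form, exactly the $s$-dimensional score vector that arises in the fixed-dimension analysis of \citet{choiruddin2017convex}; the diverging number of covariates plays no role in this statement, and the proof reduces to a central limit theorem for a uniformly small additive functional of a strongly mixing point process.

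First I would record that, for $j=1,\dots,s$,
\[
\big(\ell^{(1)}_{n,1}(w;\boldsymbol\beta_{01})\big)_j = \sum_{u\in\mathbf X\cap D_n} w(u) z_j(u) - \int_{D_n} w(u) z_j(u)\rho(u;\boldsymbol\beta_0)\,\mathrm{d}u,
\]
which is centred by the Campbell formula~\eqref{ch2: eq:campbell}, and whose exact covariance matrix is $\mathbf B_{n,11}(w;\boldsymbol\beta_{01})+\mathbf C_{n,11}(w;\boldsymbol\beta_{01})$ by~\eqref{ch2: eq:campbell}--\eqref{ch2:eq:campbell2} (this is the computation already used in the proof of Lemma~\ref{ch2:bound}). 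Next I would check that the normalisation behaves well: since $\mathbf B_{n,11}$ and $\mathbf C_{n,11}$ depend on $n$ only through $D_n$, conditions~($\mathcal C$.\ref{ch2:C:Dn})--($\mathcal C$.\ref{ch2:C:mixing}) give $\|\mathbf B_{n,11}+\mathbf C_{n,11}\|=O(|D_n|)$ (using $\sup_{u\in\mathbb R^d}\int_{\mathbb R^d}\{g(u,v)-1\}\,\mathrm{d}v<\infty$ as in Lemma~\ref{ch2:bound}), while condition~($\mathcal C$.\ref{ch2:C:BnCn}) gives $\liminf_n\nu_{\min}\big(|D_n|^{-1}(\mathbf B_{n,11}+\mathbf C_{n,11})\big)>0$; hence the eigenvalues of $|D_n|^{-1}(\mathbf B_{n,11}+\mathbf C_{n,11})$ are bounded away from $0$ and from $\infty$, and in particular $\{\mathbf B_{n,11}+\mathbf C_{n,11}\}^{-1/2}$ has operator norm $O(|D_n|^{-1/2})$.

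Then I would invoke the Cram\'er--Wold device: it is enough to prove, for every fixed $\mathbf a\in\mathbb R^{s}$ with $\|\mathbf a\|=1$, that $S_n:=\mathbf a^\top\{\mathbf B_{n,11}+\mathbf C_{n,11}\}^{-1/2}\ell^{(1)}_{n,1}(w;\boldsymbol\beta_{01})\xrightarrow{d}\mathcal N(0,1)$. Writing this linear combination as $S_n=\sum_{u\in\mathbf X\cap D_n}h_n(u)-\int_{D_n}h_n(u)\rho(u;\boldsymbol\beta_0)\,\mathrm{d}u$ with $h_n(u)=w(u)\,\boldsymbol\xi_n^\top\mathbf z_{01}(u)$ for a deterministic vector $\boldsymbol\xi_n$ satisfying $\|\boldsymbol\xi_n\|=O(|D_n|^{-1/2})$, condition~($\mathcal C$.\ref{ch2:C:cov}) and the previous step give $\sup_{u}|h_n(u)|=O(|D_n|^{-1/2})$ and, by construction, $\mathrm{Var}(S_n)=1$. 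So the problem is now the scalar CLT for a centred additive functional of $\mathbf X$ over $D_n$ whose summand is uniformly $O(|D_n|^{-1/2})$. This is precisely the scalar statement established in the fixed-$p$ setting by \citet{choiruddin2017convex}, which I would invoke; alternatively, to be self-contained, I would reprove it by the classical big-blocks/small-blocks scheme: tessellate $D_n$ into cubic blocks whose side length grows slower than $n$, separated by corridors of slowly growing width; bound the corridor contribution in $L^2$ (it is negligible by ($\mathcal C$.\ref{ch2:C:Dn}) and the boundedness of $h_n\rho$), approximate the block contributions by independent ones using $\alpha_{2,\infty}(q)=O(q^{-\tilde t})$ from ($\mathcal C$.\ref{ch2:C:mixing}), and verify a Lyapunov condition for the resulting sum using the higher-order product densities $\rho^{(k)}$, $k\le 2+t$, of ($\mathcal C$.\ref{ch2:C:rhok}) to control the $(2+t)$-th moments of the block sums --- the exponent restriction $\tilde t>d(2+t)/t$ being exactly what reconciles the corridor width, block size and moment order.

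I expect the main obstacle to be this last scalar CLT, i.e.\ the analysis that controls the spatial dependence: one must tune the block/corridor scaling so that the mixing decay annihilates the cross-block covariances while the corridors stay asymptotically negligible and the Lyapunov ratio vanishes. Everything else --- the centring, the exact covariance identity, and the spectral bounds on the normalising matrix --- is routine given Lemma~\ref{ch2:bound} and conditions~($\mathcal C$.\ref{ch2:C:Dn})--($\mathcal C$.\ref{ch2:C:BnCn}). I would also note that, since $s$ and $\mathbf z_{01}$ are fixed, none of the hypotheses concerning $p_n$, $a_n$ or $b_n$ are actually used here; they are listed in the statement only to match the conditions under which the lemma is subsequently applied.
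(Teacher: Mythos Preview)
Your proposal is correct and rests on the same key observation as the paper: because $s$ and the covariates $\mathbf z_{01}$ are fixed, the statement is a fixed-dimension CLT for an additive functional of a strongly mixing point process, and the diverging $p_n$ plays no role. Your identification of the centring and of the exact covariance via Campbell's theorems matches the paper exactly, as does your remark that the assumptions on $a_n,b_n,p_n$ are not used.

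The technical route diverges mildly. You reduce to a scalar statement via Cram\'er--Wold and then either cite the fixed-$p$ CLT of \citet{choiruddin2017convex} or sketch a Bernstein big-blocks/small-blocks argument. The paper instead decomposes $\ell^{(1)}_{n,1}(w;\boldsymbol\beta_0)$ directly as a sum of $s$-dimensional summands $Y_{i,n}$ over a unit-cube partition of $D_n$, checks $\sup_{n,i}\mathbb E\|Y_{i,n}\|^{2+\delta}<\infty$ from ($\mathcal C$.\ref{ch2:C:rhok}), and then applies a ready-made multivariate CLT for triangular arrays of mixing random fields \citep[Theorem~4]{karaczony2006central}, following \citet{coeurjolly2014variational}. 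This buys brevity: no Cram\'er--Wold, no explicit blocking analysis, just a moment bound and a citation. Your Bernstein sketch is more self-contained and makes transparent where the mixing exponent $\tilde t>d(2+t)/t$ and the higher-order product densities enter, at the cost of the usual bookkeeping on block and corridor sizes. Either approach is perfectly adequate here.
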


\begin{proof}
By Lemma~\ref{ch2:sparsity} and by using Campbell Theorems~\eqref{ch2: eq:campbell}-\eqref{ch2:eq:campbell2},
\begin{align*}
\mathrm{Var}[ \ell^{(1)}_{n,1}(w; \boldsymbol \beta_0)]= \mathbf{B}_{n,11}(w; \boldsymbol \beta_{0})+\mathbf{C}_{n,11}(w; \boldsymbol \beta_{0}).
\end{align*}
The remainder of the proof follows~\cite{coeurjolly2014variational}. Let $C_i=i+(-1/2,1/2]^d$ be the unit box centered at $i \in \mathbb{Z}^d$ and define $\mathscr{I}_n=\{i \in \mathbb{Z}^d, C_i \cap D_n \neq \emptyset \}$. Set $D_n={\displaystyle \bigcup_{i \in \mathscr{I}_n} C_{i,n}}$, where $C_{i,n}=C_i \cap D_n$. We have
\[
\ell^{(1)}_{n,1}(w;\boldsymbol \beta_0)={\sum_{i \in \mathscr{I}_n} Y_{i,n}}	
\]
where
\[
Y_{i,n}=\!\!\!\!\!\sum_{u \in \mathbf{X} \cap C_{i,n}} \!\!\! w(u)\mathbf{z}_{01}(u) - \int_{ C_{i,n}} w(u)\mathbf{z}_{01}(u)\exp(\boldsymbol\beta_{01}^\top \mathbf z_{01}(u) )\mathrm{d}u.	
\]
For any $n \geq 1$ and any $i \in \mathscr{I}_n$, $Y_{i,n}$ has zero mean, and by condition ($\mathcal C$.\ref{ch2:C:rhok}),
\begin{align}
{\displaystyle \sup_{n \geq 1} \sup_{i \in \mathscr{I}_n} \mathbb{E}(\|Y_{i,n}\|^{2+\delta})} < \infty. \label{ch2:eq:12}
\end{align}

If we combine (\ref{ch2:eq:12}) with conditions  ($\mathcal C$.\ref{ch2:C:Dn})-($\mathcal C$.\ref{ch2:C:BnCn}), we can apply \citet[][Theorem 4]{karaczony2006central}, a central limit theorem for triangular arrays of random fields.

\end{proof}


\bigskip

\begin{proof} We now focus on the proof of Theorem~\ref{ch2:the2}. Since Theorem~\ref{ch2:the2}(i) is proved by Lemma~\ref{ch2:sparsity}, we only need to prove Theorem~\ref{ch2:the2}(ii), which is the asymptotic normality of $\boldsymbol {\hat{\beta}}_1$. As shown in Theorem~\ref{ch2:the1}, there is a root-$(|D_n|/p_n)$ consistent local maximizer $\boldsymbol{\hat{\beta}}$ of $Q_n(w;\boldsymbol \beta)$, and it can be shown that there exists an estimator $\boldsymbol {\hat{\beta}}_1$ in Theorem~\ref{ch2:the1} that is a root-$(|D_n|/p_n)$ consistent local maximizer of $ Q_n \Big(w;({\boldsymbol \beta_1}^\top,\mathbf{0}^\top)^\top \Big)$, which is regarded as a function of  $\boldsymbol {\beta}_1$, and that satisfies
\begin{align*}
\frac {\partial Q_n(w;\boldsymbol {\hat \beta})}{\partial\beta_j}=0 \quad
\mbox { for } j=1,\ldots,s \mbox { and } \boldsymbol{\hat \beta}=( \boldsymbol {\hat{\beta}}_1^\top,\mathbf{0}^ \top)^\top.
\end{align*}
There exists $t\in (0,1)$ and $\boldsymbol{\tilde{\beta}}= \boldsymbol{\hat \beta} + t(\boldsymbol\beta_0-\boldsymbol{\hat \beta})$  such that for $j=1,\cdots,s$
\begin{align}
0
=&\frac {\partial \ell_n{(w;\boldsymbol{\hat \beta})}}{\partial\beta_j}-|D_n|p'_{\lambda_{n,j}}(|\hat \beta_{j}|)\sign(\hat \beta_j) \nonumber\\
=&\frac {\partial \ell_n{(w;\boldsymbol \beta_0)}}{\partial\beta_j}+{\sum_{l=1}^{s} \frac {\partial^2 \ell_n{(w; \boldsymbol{\tilde{\beta}})}}{\partial\beta_j \partial\beta_l}}({\hat \beta_l}-\beta_{0l})-|D_n|p'_{\lambda_{n,j}}(|\hat \beta_{j}|)\sign(\hat \beta_j) \nonumber\\
=&\frac {\partial \ell_n{(w;\boldsymbol \beta_0)}}{\partial\beta_j}+{\sum_{l=1}^{s} \frac {\partial^2 \ell_n{(w; \boldsymbol \beta_0)}}{\partial\beta_j \partial\beta_l}}({\hat \beta_l}-\beta_{0l})+{\sum_{l=1}^{s} \Psi_{n,jl}({\hat \beta_l}-\beta_{0l})} \nonumber \\
&-|D_n|p'_{\lambda_{n,j}}(|\beta_{0j}|)\sign(\beta_{0j})-|D_n|\phi_{n,j}, \label{ch2:eq:0equal}
\end{align}
where 
\begin{align*}
\Psi_{n,jl}=\frac {\partial^2 \ell_n{(w;\boldsymbol{\tilde{\beta}})}}{\partial\beta_j \partial\beta_l}-\frac {\partial^2 \ell_n{(w;\boldsymbol \beta_0)}}{\partial\beta_j \partial\beta_l}
\end{align*}
and $\phi_{n,j}=p'_{\lambda_{n,j}}(|\hat \beta_{j}|)\sign(\hat \beta_j)-p'_{\lambda_{n,j}}(|\beta_{0j}|)\sign(\beta_{0j})$. Since $p'_{\lambda}$ is a Lipschitz function by condition~($\mathcal C$.\ref{ch2:C:plambda}), there exists $\kappa \geq 0$ such that by condition on $a_n$
\begin{align}
\phi_{n,j} & =p'_{\lambda_{n,j}}(|\hat \beta_{j}|)\sign(\hat \beta_j)-p'_{\lambda_{n,j}}(|\beta_{0j}|)\sign(\beta_{0j}) \nonumber\\
 & = \big(p'_{\lambda_{n,j}}(|\hat \beta_{j}|)-p'_{\lambda_{n,j}}(|\beta_{0j}|)\big) \sign(\hat \beta_j) + p'_{\lambda_{n,j}}(|\beta_{0j}|) \big(\sign(\hat \beta_j)-\sign(\beta_{0j})\big) \nonumber \\
& \leq \kappa \big| |\hat \beta_j| - |\beta_{0j}| \big| + 2a_n \nonumber \\
& \leq \kappa | \hat \beta_j - \beta_{0j} | + 2a_n \label{Ch2:T_2}.
\end{align}
We now decompose $\phi_{n,j}$ as $\phi_{n,j}=T_1+T_2$ where
\[
	T_1=\phi_{n,j} \mathbb{I}(|\hat \beta_j- \beta_{0j}| \leq \tilde r_{n,j}) 
	\quad \mbox{ and }\quad 
	T_2=\phi_{n,j} \mathbb{I}(|\hat \beta_j-\beta_{0j}| > \tilde r_{n,j})
\]
and where $\tilde r_{n,j}$ is the sequence defined in the condition ($\mathcal C$.\ref{ch2:C:plambda}). Under this condition, the following Taylor expansion can be derived for the term $T_1$: there exists $t\in (0,1)$ and $\check\beta_j= \hat\beta_j+t(\beta_{0j}-\hat\beta_j)$ such that 
\begin{align*}
 T_1&  =p''_{\lambda_{n,j}}(|\beta_{0j}|)(\hat \beta_j- \beta_{0j}) \mathbb{I}(|\hat \beta_j- \beta_{0j}| \leq \tilde r_{n,j}) \\
 &\quad + \frac12 (\hat\beta_{j} - \beta_{0j})^2 p'''_{\lambda_{n,j}}(|\tilde{\beta}_{j}|) \mathrm{sign}(\check \beta_j) \mathbb{I}(|\hat \beta_j- \beta_{0j}| \leq \tilde r_{n,j}) \\
 &=p''_{\lambda_{n,j}}(|\beta_{0j}|)(\hat \beta_j- \beta_{0j}) \mathbb{I}(|\hat \beta_j- \beta_{0j}| \leq \tilde r_{n,j}) + O_{\mathrm P}(p_n/|D_n|)
\end{align*}
where the latter equation ensues from Theorem~\ref{ch2:the1} and condition~($\mathcal C$.\ref{ch2:C:plambda}). Again,
from Theorem~\ref{ch2:the1},  $\mathbb{I}(|\hat \beta_j- \beta_{0j}| \leq \tilde r_{n,j}) \xrightarrow{L^1} 1$  which implies that $\mathbb{I}(|\hat \beta_j- \beta_{0j}| \leq \tilde r_{n,j}) \xrightarrow{\mathrm{P}} 1$,  so $ T_1=p''_{\lambda_{n,j}}(|{\beta}_{0j}|)(\hat \beta_j- \beta_{0j}) \big(1+o_{\mathrm{P}}(1)\big)+  O_{\mathrm P}(p_n/|D_n|)$.

Regarding the term $T_2$, we have by \eqref{Ch2:T_2}
\begin{align*}
T_2 &\leq \{\kappa | \hat \beta_j - \beta_{0j} | + 2a_n \label{T_2}\} \; \mathbb{I}(|\hat \beta_j-\beta_{0j}| > \tilde r_{n,j}) \\
&= \kappa | \hat \beta_j - \beta_{0j} | \; \mathbb{I}(|\hat \beta_j-\beta_{0j}| > \tilde r_{n,j}) + o(|D_n|^{-1/2}).	
\end{align*}
We want to prove that $T_2 = o_{\mathrm{P}}(|D_n|^{-1/2})$. Define $S_n=|\hat \beta_j-\beta_{0j}| \;\mathbb{I}(|\hat \beta_j-\beta_{0j}| > \tilde r_{n,j})$ and $T_n=\mathbb{I}(S_n> \delta|D_n|^{-1/2})$ for some $\delta>0$. We claim that the result is proved if we prove that $\mathbb{E}T_n \to 0$ for any $\delta>0$.  Condition~($\mathcal C$.\ref{ch2:C:plambda}) implies in particular that for $n$ large enough, $\tilde r_{n,j} > \sqrt{p_n/|D_n|} > \sqrt{1/|D_n|}$. Using this, it can be checked that the binary random variable $T_n$ reduces to $T_n=\mathbb{I}(|\hat \beta_j-\beta_{0j}| > \tilde r_{n,j})\stackrel{L^1}{\to}0$ as $n\to \infty$.

Then, we  deduce that 
\begin{equation} \label{eq:phinj}
	\phi_{n,j} = p''_{\lambda_{n,j}}(|{\beta}_{0j}|)(\hat \beta_j- \beta_{0j}) \big(1+o_{\mathrm{P}}(1)\big)+ O_{\mathrm P}(p_n/|D_n|) + o_{\mathrm P}(|D_n|^{-1/2}).
\end{equation}

Let $\ell^{(1)}_{n,1}(w;\boldsymbol \beta_{0})$ (resp. $\ell^{(2)}_{n,1}(w;\boldsymbol \beta_{0})$) be the first $s$ components (resp. $s \times s$ top-left corner) of $\ell^{(1)}_{n}(w;\boldsymbol \beta_{0})$ (resp. $\ell^{(2)}_{n}(w;\boldsymbol \beta_{0})$). Let also $\boldsymbol \Psi_n$ be the $s \times s$ matrix containing $\Psi_{n,jl}, j,l=1,\ldots,s$. Finally, let the vector $\mathbf{p}'_n$, the vector $\boldsymbol \phi_n$ and the $s \times s$ matrix $\mathbf{M}_n$ be defined by
\begin{align*}
\mathbf{p}'_n&=\{p'_{\lambda_{n,1}}(|\beta_{01}|)\sign(\beta_{01}),\ldots,p'_{\lambda_{n,s}}(|\beta_{0s}|)\sign(\beta_{0s})\}^\top, \\
\boldsymbol \phi_n&=\{\phi_{n,1},\ldots,\phi_{n,s}\}^\top, \mbox{ and}\\
\mathbf{M}_n&=\{ \mathbf{B}_{n,11}(w; \boldsymbol \beta_{0})+\mathbf{C}_{n,11}(w; \boldsymbol \beta_{0})\}^{-1/2}.
\end{align*}
We rewrite both sides of~\eqref{ch2:eq:0equal} as
\begin{equation}
\ell^{(1)}_{n,1}(w;\boldsymbol \beta_{0})+\ell^{(2)}_{n,1}(w;\boldsymbol \beta_{0})(\boldsymbol{\hat \beta}_1-\boldsymbol \beta_{01})+ \boldsymbol \Psi_n (\boldsymbol{\hat \beta}_1-\boldsymbol \beta_{01}) -|D_n| \mathbf{p}'_n-|D_n| \boldsymbol \phi_n   =0.\label{eq:0vec}
\end{equation}
By definition of $\boldsymbol \Pi_n$ given by (\ref{ch2:eq:pi}) and from~\eqref{eq:phinj}, we obtain $\boldsymbol \phi_n=\boldsymbol \Pi_n (\boldsymbol{\hat \beta}_1-\boldsymbol \beta_{01})\big(1+o_{\mathrm{P}}(1)\big)+ O_{\mathrm P}({p_n/|D_n|}) +  o_{\mathrm P}(|D_n|^{-1/2})$. Using this, we deduce, by premultiplying both sides of~\eqref{eq:0vec} by $\mathbf M_n$, that
\begin{align*}
\mathbf{M}_n \ell^{(1)}_{n,1}(w;\boldsymbol \beta_{0})-&\mathbf{M}_n \big(\mathbf{A}_{n,11}(w;\boldsymbol \beta_{0})+ |D_n| \boldsymbol \Pi_n\big)(\boldsymbol{\hat \beta}_1-\boldsymbol \beta_{01}) \\
& =O(|D_n| \, \|\mathbf{M}_n \mathbf{p}'_n \|) + o_{\mathrm{P}}(|D_n| \, \|\mathbf{M}_n\boldsymbol \Pi_n (\boldsymbol{\hat \beta}_1-\boldsymbol \beta_{01}) \|)  \\
& \quad+ O_{\mathrm{P}} (\|\mathbf M_n \| p_n) +  o_{\mathrm P}(\|\mathbf M_n \| |D_n|^{1/2}) \\
& \quad+ O_{\mathrm{P}} (\|\mathbf M_n \boldsymbol \Psi_n (\boldsymbol{\hat \beta}_1-\boldsymbol \beta_{01}) \|) .
\end{align*}

Now, $\|\mathbf M_n\|=O(1/ \sqrt{|D_n|})$  by condition ($\mathcal C$.\ref{ch2:C:BnCn}), $\| \boldsymbol \Psi_n\|=O_{\mathrm{P}}(\sqrt{p_n |D_n|})$  by conditions ($\mathcal C$.\ref{ch2:C:Theta})-($\mathcal C$.\ref{ch2:C:cov}) and Theorem~\ref{ch2:the1}, and $\|\boldsymbol{\hat \beta}_1-\boldsymbol \beta_{01}\|=O_{\mathrm{P}}(\sqrt{p_n/|D_n|})$ by Theorem~\ref{ch2:the1} and Theorem~\ref{ch2:the2}(i). Finally, since by assumptions that \linebreak $a_n \sqrt{|D_n|}\to 0$ and $c_n \sqrt{p_n} \to 0$ as $n \to \infty$, we deduce that
\begin{align*}
|D_n| \,\|\mathbf{M}_n \mathbf{p}'_n\| &= O(a_n \sqrt{D_n}) =o(1),\\
|D_n| \, \|\mathbf{M}_n\boldsymbol \Pi_n (\boldsymbol{\hat \beta}_1-\boldsymbol \beta_{01}) \| &= O_\mathrm{P}\left(\sqrt{|D_n|} c_n \sqrt{\frac{p_n}{|D_n|}} \right) =o_\mathrm{P}(1)  ,\\
 \|\mathbf M_n \| \; \sqrt{|D_n|} &=  O(1),\\
\|\mathbf M_n \| \; p_n &=  O\left( \sqrt{\frac{p_n^2}{|D_n|}}\right)=o(1),\\
\|\mathbf{M}_n \boldsymbol \Psi_n (\boldsymbol{\hat \beta}_1-\boldsymbol \beta_{01})\| &=O_{\mathrm{P}}\left( \sqrt{\frac{p_n^2}{|D_n|}}\right)=o_{\mathrm{P}}(1).
\end{align*}
The last two lines are obtained from ($\mathcal C$.\ref{ch2:C9}). Therefore, we have that
\begin{align*}
\mathbf{M}_n \ell^{(1)}_{n,1}(w;\boldsymbol \beta_{0})-\mathbf{M}_n \big(\mathbf{A}_{n,11}(w;\boldsymbol \beta_{0})+ |D_n| \boldsymbol \Pi_n\big)(\boldsymbol{\hat \beta}_1-\boldsymbol \beta_{01}) =o_{\mathrm{P}}(1).
\end{align*}
By (\ref{eq:normal}) in Lemma~\ref{ch2:clt} and by Slutsky's Theorem, we deduce that
\begin{align*}
\{ \mathbf{B}_{n,11}(w; \boldsymbol \beta_{0})+\mathbf{C}_{n,11}(w; \boldsymbol \beta_{0})\}^{-1/2} \times \\
\{\mathbf{A}_{n,11}(w;\boldsymbol \beta_{0})+|D_n| \boldsymbol \Pi_n\}(\boldsymbol{\hat \beta}_1-\boldsymbol \beta_{01})&\xrightarrow{d} \mathcal{N}(0,\mathbf{I}_{s})
\end{align*}
as $n \to \infty$, which can be rewritten, in particular under ($\mathcal C$.\ref{ch2:C:An}), as 
\[
|D_n|^{1/2}\boldsymbol \Sigma_n(w;\boldsymbol \beta_{0})^{-1/2}(\boldsymbol{\hat \beta}_1-\boldsymbol \beta_{01})\xrightarrow{d}\mathcal{N}(0,\mathbf{I}_{s})	
\]
where $\mathbf \Sigma_n(w,\boldsymbol \beta_{0})$ is given by~\eqref{ch2:eq:Sigman}.
\end{proof}



\bibhang=1.7pc
\bibsep=2pt
\fontsize{9}{14pt plus.8pt minus .6pt}\selectfont
\renewcommand\bibname{\large \bf References}
\expandafter\ifx\csname
natexlab\endcsname\relax\def\natexlab#1{#1}\fi
\expandafter\ifx\csname url\endcsname\relax
  \def\url#1{\texttt{#1}}\fi
\expandafter\ifx\csname urlprefix\endcsname\relax\def\urlprefix{URL}\fi

\vskip .65cm
\noindent
Department of Mathematical Sciences, Aalborg University, Denmark
\vskip 1pt
\noindent
E-mail: achmad@math.aau.dk
\vskip 6pt

\noindent
Department of Mathematics, Universit\'e du Qu\'ebec \`a Montr\'eal (UQAM), Canada and\\
Department of Probability and Statistics, Universit\'e Grenoble Alpes, France
\vskip 1pt
\noindent
E-mail: coeurjolly.jean-francois@uqam.ca
\vskip 6pt

\noindent
Department of Probability and Statistics, Universit\'e Grenoble Alpes, France
\vskip 1pt
\noindent
E-mail: frederique.letue@univ-grenoble-alpes.fr 


\end{document}